\newtheorem{theorem}{Theorem}[section]
\newtheorem{corollary}[theorem]{Corollary}
\newtheorem{proposition}[theorem]{Proposition}
\newtheorem{lemma}[theorem]{Lemma}
\theoremstyle{definition}
\newtheorem{definition}[theorem]{Definition}
\newtheorem{example}[theorem]{Example}
\pgfplotsset{compat=newest}
\newcommand{\BR}{\textit{{BR}}}
\newcommand{\fBR}{\widetilde{\BR}}
\newcommand{\uff}{\tilde{u}^F}
\newcommand{\xx}{\mathbf{x}}
\newcommand{\yy}{\mathbf{y}}
\newcommand{\zz}{\mathbf{z}}
\newcommand{\ww}{\mathbf{w}}
\newcommand{\gv}{\mathbf{g}}
\newcommand{\ee}{\mathbf{e}}
\newcommand{\bb}{\mathbf{b}}
\newcommand{\hh}{\mathbf{h}}
\newcommand{\calG}{\mathcal{G}}
\DeclareMathOperator{\relint}{relint}
\DeclareMathOperator*{\argmin}{argmin}
\DeclareMathOperator*{\argmax}{argmax}
\renewcommand{\emptyset}{\varnothing}
\begin{document}

\allowdisplaybreaks

\title{\bf Optimally Deceiving a Learning Leader \\ in Stackelberg Games\thanks{Georgios Birmpas is supported by the ERC Starting grant number 639945 (ACCORD). Jiarui Gan is supported by the EPSRC International Doctoral Scholars Grant EP/N509711/1. Alexandros Hollender is supported by an EPSRC doctoral studentship (Reference 1892947).
}}

\author[1]{Georgios Birmpas}
\author[1]{Jiarui Gan}
\author[1]{Alexandros Hollender}
\author[1]{\\ Francisco J. Marmolejo-Coss{\'\i}o}
\author[1]{Ninad Rajgopal}
\author[2]{Alexandros A. Voudouris}

\affil[1]{University of Oxford, UK}
\affil[2]{University of Essex, UK}

\date{}

\maketitle   

\begin{abstract}
Recent results in the ML community have revealed that learning algorithms used to compute the optimal strategy for the leader to commit to in a Stackelberg game, are susceptible to manipulation by the follower. Such a learning algorithm operates by querying the best responses or the payoffs of the follower, who consequently can deceive the algorithm by responding as if his payoffs were much different than what they actually are. For this strategic behavior to be successful, the main challenge faced by the follower is to pinpoint the payoffs that would make the learning algorithm compute a commitment so that best responding to it maximizes the follower's utility, according to his true payoffs. While this problem has been considered before, the related literature only focused on the simplified scenario in which the payoff space is finite, thus leaving the general version of the problem unanswered. In this paper, we fill in this gap, by showing that it is always possible for the follower to compute (near-)optimal payoffs for various scenarios about the learning interaction between leader and follower.
\end{abstract}    

\section{Introduction}
{\em Stackelberg games} are a simple yet powerful model for sequential interaction among strategic agents. In such games there are two players: a leader and a follower. The leader commits to an action, and the follower acts upon observing the leader's commitment. The simple sequential structure of the game permits modeling a multitude of important scenarios. Indicative applications include the competition between a large and a small firm \citep{von2010market}, the allocation of defensive resources \citep{tambe2011security}, the competition among mining pools in the Bitcoin network~\citep{marmolejo2019competing,sun2020miners}, and the protection again manipulation in elections~\citep{elkind2019protecting,yin2018optimal}.

In Stackelberg games, the leader is interested in finding the best commitment she can make, assuming that the follower behaves rationally. The combination of such a commitment by the leader and the follower's rational best response to it leads to a strong Stackelberg equilibrium (SSE). In general, the utility that the leader obtains in an SSE is larger than what she would obtain in a Nash equilibrium of the corresponding one-shot game~\citep{von2004leadership}, implying that the leader prefers to commit than to engage in a simultaneous game with the follower. 

In case the leader has access to both hers and the follower's payoff parameters, computing an SSE is a computationally tractable problem~\citep{conitzer2006computing}. In practice however, the leader may have limited or no information about the follower's payoffs. Consequently, in order to determine the optimal commitment, the leader must endeavor to elicit information about the incentives of the follower through indirect means. This avenue of research has led to a plethora of active-learning-based approaches for the computation of SSEs~\citep{balcan2015committment,blum2015learning,letchford2009learning,peng2019learning,roth2016watch}. 
At the same time, inspired by recent developments in the ML community regarding adversarial examples in classification algorithms \citep{barreno2010security,lowd2005adversarial},
there has been a stream of recent papers exploring the notion of adversarial deception by  the follower, when facing algorithms used by the leader for learning SSEs in Stackelberg games. 

Specifically, when an algorithm learns an SSE by querying the follower's best responses, the follower can use fake best responses to distort the SSE learned by the algorithm. As recently explored by \citet{gan2019imitative}, one particular approach the follower can employ, is to imitate best responses implied by payoffs that are different from his actual ones. Therefore, the key to the success of such a deceptive behavior is to pinpoint the fake payoffs that could make the leader learn an SSE in which the actual utility of the follower is maximized. In the scenario studied in \citep{gan2019imitative}, this task is trivial as the follower's choices are limited to a finite set of polynomially many payoff matrices; thus, to efficiently find out the optimal payoffs, the follower can simply enumerate all possible matrices.

To the best of our knowledge, the general version of this problem, where the follower is allowed to use {\em any} payoff matrix, without restrictions on the space of possible values, has been considered only in two very recent papers~\citep{gan2019manipulating,nguyen2019imitative}, which however focused on the specific application of Stackelberg games to security resource allocation problems. Besides that, no progress has been made for general Stackelberg games. In this paper, we aim to fill in this gap, by completely resolving this computational problem, a result that reflects the insecurity of learning to commit in Stackelberg games. 

\subsection*{Our Contribution}
We explore how a follower can optimally deceive a learning leader in Stackelberg games by misreporting his payoff matrix, and study the tractability of the corresponding optimization problem. As in previous work, our objective is to compute the fake payoff matrix according to which the follower can best respond to make the leader learn an SSE in which the true utility of the follower is maximized. However, unlike the related literature, we do not impose any restrictions on the space from which the payoffs are selected or on the type of the game. 
By exploiting an intuitive characterization of all strategy profiles that can be induced as SSEs in Stackelberg games, we show that it is always possible for the follower to compute a payoff matrix implying an SSE which maximizes his true utility, in polynomial time. Furthermore, we strengthen this result to resolve possible equilibrium selection issues, by showing that the follower can construct a payoff matrix that induces a {\em unique} SSE, in which his utility is maximized up to some arbitrarily small loss.

\subsection*{Other Related Work}
Our paper is related to an emerging line of work at the intersection of machine learning and algorithmic game theory, dealing with scenarios where the samples used for training learning algorithms are controlled by strategic agents, who aim to optimize their personal benefit. Indicatively, there has been recent interest in the analysis of the effect of strategic behavior on the efficiency of existing algorithms, as well as the design of algorithms resilient to strategic manipulation for linear regression~\citep{ben-porat2019regression,chen2018regression,dekel2010regression,hossain2020regression,perote2004regression,waggoner2015market} and classification~\citep{chen2019grinding,dong2018classification,meir2012classification,shang2019samples}.
 
Beyond the strategic considerations above, our work is also related to the study of query protocols for learning game-theoretic equilibria. In this setting, as in ours, algorithms for computing equilibria via utility and best response queries are a natural starting point. For utility queries, there has been much work in proving exponential lower bounds for randomized computation of exact, approximate and well-supported Nash equilibria~\citep{BabR,B13,CCT15,GR14,HM10,HN13}, as well as providing query-efficient protocols for approximate Nash equilibrium computation in bimatrix games, congestion games \citep{FGGS}, anonymous games \citep{GT14}, and large games \citep{GMZ19logarithmic}. 
Best response queries are forcibly weaker than utility queries, but they arise naturally in practice, and are also expressive enough to implement fictitious play, a dynamic first proposed in \cite{Brown}, and proven to converge in \citep{Robinson} for two-player zero-sum games to an approximate Nash equilibrium. In terms of equilibrium computation, the authors in \citep{goldberg2018learning} also provide query-efficient algorithms for computing approximate Nash equilibria for bimatrix games via best response queries provided one agent has a constant number of strategies.

Finally, learning via incentive queries in games is directly related to the theory of preference elicitation, where the goal is to mine information about the private parameters of the agents by interacting with them~\citep{BJSZ04,LahaieParkes04,zinkevich2003polynomial,GLM20}. This has many applications, most notably combinatorial auctions, where access to the valuation functions of the agents is achieved via value or demand queries~\citep{combinatorial-chapter,ConenSandholm01,NisanSegal06}. 

\section{Preliminaries}\label{sec:prelim}
A Stackelberg game (SG) is a sequential game between a \emph{leader} and a \emph{follower}\footnote{Following the standard convention, we will refer to the leader as a female and to the follower as a male.}. 
The leader commits to a strategy, and the follower then acts upon observing this commitment. We consider finite SGs, in which the leader and the follower have $m$ and $n$ {\em pure strategies} at their disposal, respectively, and their utilities for all possible outcomes are given by the matrices $u^L,u^F \in \mathbb{R}^{m \times n}$. The entries $u^L (i,j)$ and $u^F(i,j)$ denote the utilities of the leader and the follower, under {\em pure strategy profile} $(i,j) \in [m] \times [n]$. We use $\calG = (u^L,u^F)$ to denote the SG with payoff matrices $u^L$ and $u^F$; we omit $m$ and $n$ as they are clear from context.

Like one-shot games, the agents are allowed to employ mixed strategies whereby they randomize over actions in their strategy set. A mixed strategy of the leader is a probability distribution over $[m]$, denoted by $\xx \in \Delta^{m-1} = \{\xx \geq 0 : \sum_{i \in [m]} x_i = 1 \}$. 
By slightly abusing notation, we let 
$u^L(\xx,j) = \sum_{i \in [m]} x_i \cdot u^L(i,j)$
be the {\em expected utility} of the leader when she plays the mixed strategy $\xx$ and the follower plays a pure strategy $j$.
Similarly, we define $u^F(\xx,j) = \sum_{i \in [m]} x_i \cdot u^F(i,j)$ for the follower. For a given mixed strategy $\xx \in \Delta^{m-1}$ of the leader, we say that $j \in [n]$ is a {\em follower best response} if $u^F(\xx,j) = \max_{\ell \in [n]} u^F(\xx,\ell)$; we denote the set of all follower best responses to $\xx$ by $\BR(\xx) \subseteq [n]$ and refer to the function $\BR$ as the {\em best response correspondence} of the follower. 

A {\em strong Stackelberg equilibrium} (SSE) is the standard solution concept in SGs, and captures the situation where the leader commits to a mixed strategy that maximizes her expected utility, while taking into account the follower's best response to her commitment. It is assumed that the follower breaks ties in favor of the leader when he has multiple best responses.
\footnote{This standard assumption is justified by the fact that such tie-breaking behavior can often be enforced by an infinitesimal perturbation in the leader's strategy \citep{von2004leadership}.} 

\begin{definition}[SSE]
\label{def:sse}
A strategy profile $(\xx,j)$ is an SSE of the SG $\calG = (u^L,u^F)$ if
\begin{align*}
	( \xx, j ) \in {\arg\max}_{\yy \in \Delta^{m-1}, \, \ell \in \BR(\yy) } u^L (\yy, \ell ).
\end{align*}
\end{definition}

\paragraph{Learning SSEs and Deceptive Follower Behavior.}
We consider the scenario where the leader has full knowledge of her utility matrix $u^L$, and aims to compute an SSE by interacting with the follower and gleaning information about $u^F$. For example, the leader could observe follower best responses in play (akin to having query access to $\BR$), or observe follower payoffs at pure strategy profiles during play (akin to having query access to $u^F$ as a function). Hence, this can be cast as the problem of learning an SSE with a specified notion of query access to information about the follower's incentives. 

Consider an SG $\calG = (u^L,u^F)$. If the follower controls the flow of information to the leader in this paradigm, he may consider perpetually interacting with the leader as if he had a different payoff matrix $\uff$, which can make the leader believe that both agents are playing the game $\widetilde{\calG}=(u^L,\uff)$. This deceiving power provides the follower with an incentive to act according to $\widetilde{\calG}$ for a judicious choice of $\uff$, because the SSEs in $\widetilde{\calG}$ may provide larger utility (according to $u^F$) than the SSEs in $\calG$. More concretely, the example below shows that the follower can gain an arbitrary benefit by deceiving the leader to play a different game.

\begin{example}[\bf Beneficial deception]
Let $\alpha \in [0,1]$ and consider the following matrices:
\[
R = 
\begin{pmatrix}
1 & 0 \\
0 & 0
\end{pmatrix},\quad
C_\alpha= 
\begin{pmatrix}
0 & \alpha  \\
1 & \alpha 
\end{pmatrix}
\]
Now, suppose that $u^L = R$ and $u^F = C_\alpha$, and let $x \in [0,1]$ represent the probability mass that the leader (row player) places on the first row (her first strategy); thus, $1-x$ is the probability with which she plays her second strategy. Given this mixed strategy of the leader, the utilities that the follower expects to derive from her two strategies (columns) are $u^F(x,1) = 1-x$ and $u^F(x,2)=\alpha$. Consequently, the first strategy is a best response of the follower when $x \in [0,1-\alpha]$, and the second one is a best response when $x \in (1-\alpha,1]$ (when $x = 1 - \alpha$, the tie is broken in favor of the leader). With this information, it is clear that the SSE of the game occurs when the leader chooses $x = 1-\alpha$ and the follower plays his first strategy. As a result, the follower's utility is $u^F(1-\alpha,1)=\alpha$. 

However, for any $\alpha < 1$, the follower has an incentive to deceive the leader into playing the game $\widetilde{\calG} = (R, C_1)$, which will improve his utility in the resulting SSE to $1$. This will be an improvement by a multiplicative factor of $1/\alpha$, which can be arbitrarily large when $\alpha$ is arbitrarily close to $0$.
\hfill $\qed$
\end{example}

\paragraph{Inducible Strategy Profiles.}
The ultimate goal of the follower is to identify the SSE that maximizes his true utility, from the set of SSEs that he can deceive the leader into learning. We will refer to such SSEs as {\em inducible strategy profiles}. At a high level, the follower's problem can now be expressed as the following optimization problem:
\begin{align} 
\label{prg:high-level}
&\max_{\xx, j}  \quad u^F(\xx, j), \\
&\text{ subject to } \quad  (\xx,j) \text{ is inducible} \nonumber
\end{align}
This maximum utility for the follower is called the \emph{optimal inducible utility}. If the maximum value is never achieved, then for every $\varepsilon > 0$, we would like to be able to find an inducible SSE that achieves a value $\varepsilon$-close to the supremum value.

As discussed previously, the leader can learn an SSE by gleaning information about the incentives of the follower by querying the best responses of the follower to particular leader strategies, or more refined information about the follower's payoff matrix. Depending on the type of information queried, we can define various levels of inducible strategy profiles.

In more detail, suppose the leader can only query the best responses of the follower, who behaves according to some best response correspondence $\fBR: \Delta^{m-1} \rightarrow 2^{[n]} \setminus \{\emptyset\}$. This interaction between the leader and the follower leads to a game $\widetilde{\calG} = (u^L, \fBR)$ where only information about $\fBR$ is known (instead of a payoff matrix implying $\fBR$). The definition of $\fBR$ enforces a best response answer to any possible query. Consequently, the leader learns an SSE $( \xx, j ) \in {\arg\max}_{\yy \in \Delta^{m-1}, \, \ell \in \fBR(\yy) } u^L (\yy, \ell )$, which yields the following notion of {\em BR-inducible} strategy profiles.

\begin{definition}[BR-inducibility]\label{def:BR-inducible}
A strategy profile $(\xx, j)$ is {\em BR-inducible} with respect to $u^L$ if there exists a best response correspondence $\fBR: \Delta^{m-1} \rightarrow 2^{[n]} \setminus \{\emptyset\}$ such that $(\xx,j)$ is an SSE of the game $\widetilde{\calG} = (u^L, \fBR)$, in which case we say that $(\xx, j)$ is induced by $\fBR$.
\end{definition}

Next, consider the case where the leader can query information about the payoffs of the follower, who can now behave according to a fake payoff matrix $\uff$. We refer to the SSEs of the resulting game $\widetilde{\calG} = (u^L,\uff)$ as {\em payoff-inducible} strategy profiles. 

\begin{definition}[Payoff-inducibility]\label{def:payoff-inducible}
A strategy profile $(\xx, j)$ is said to be {\em payoff-inducible} with respect to $u^L$ if there exists $\uff \in \mathbb{R}^{m\times n}$ such that $(\xx, j)$ is an SSE in the game $\widetilde{\calG} = (u^L,\uff)$, in which case we say that $(\xx, j)$ is induced by $\uff$.
\end{definition}

Clearly, payoff-inducibility is stricter than BR-inducibility: for every choice of $\uff$, the corresponding best response correspondence $\fBR(\yy) = \arg\max_{\ell\in[n]} \uff(\yy, \ell)$ induces the same SSEs as $\uff$ does. 

Note that the above definitions only require an inducible strategy profile to be a verifiable SSE, with respect to information about the follower's incentive (either $\fBR$ or $\uff$). 
However, it may happen that the resulting game $\widetilde{\calG}$ has multiple SSEs, which gives rise to an equilibrium selection issue. Indeed, in practice, it is not realistic to assume that the follower has any control over which SSE is chosen by the leader (who moves the first in the game). To address this, and thus completely resolve the optimal deception problem for the follower, we introduce an even stricter notion of inducibility on top of payoff-inducibility, which requires $\widetilde{\calG}$ to have a unique SSE.

\begin{definition}[Strong inducibility]\label{def:strong-inducible}
A strategy profile $(\xx,j)$ is said to be \emph{strongly inducible} with respect to $u^L$, if there exists a matrix $\uff \in \mathbb{R}^{m\times n}$ such that $(\xx,j)$ is the \emph{unique} SSE of the game $\widetilde{\calG}=(u^L,\uff)$, in which case we say that $(\xx,j)$ is \emph{strongly} induced by $\uff$. 
\end{definition}

In the next sections, we will investigate solutions to \eqref{prg:high-level} under the inducibility notions above, from the weakest to the strongest.
Our general approach is to decompose \eqref{prg:high-level} into $n$ sub-problems by enumerating all possible follower responses $j \in [n]$. For each strategy $j$, we solve the corresponding optimization problem, and pick the one that yields the maximum utility for the follower. Due to space constraints, some proofs are omitted and can be found in the supplementary material. 

\section{Best Response Inducibility}
\label{sec:best-response}
Let us start our analysis by considering the case in which the leader queries the best responses of the follower. The aim of the follower is to deceive the leader towards a strategy profile that is BR-inducible; see Definition~\ref{def:BR-inducible}. Indeed, if the follower is allowed to use an arbitrary $\fBR$ to induce a strategy profile $(\xx,j)$, he can simply define $\fBR$ as follows:
\begin{align*}
\fBR(\yy) =
\begin{cases}
\{j\} & \text{if } \yy = \xx \\
\arg\min_{\ell \in [n]} u^L(\yy, \ell) & \text{if } \yy \neq \xx.
\end{cases}
\end{align*}
Namely, the follower threatens to choose the worst possible response against any leader strategy $\yy \neq \xx$, so as to minimize the leader's incentive to commit to these strategies. 
This $\fBR$ will successfully convince the leader that $(\xx,j)$ is an SSE of $\widetilde{\calG}$, hence inducing $(\xx, j)$, if the threat is powerful enough, that is, if $u^L(\xx, j) \ge \min_{\ell \in [n]} u^L(\yy, \ell)$ for all $\yy \in \Delta^{m-1}$. Equivalently, this means that
\begin{align}
\label{eq:arbitrary-BR-condition}
u^L(\xx, j) \ge M:= \max_{\yy\in\Delta^{m-1}} \min_{\ell \in [n]} u^L(\yy, \ell),
\end{align}
where $M$ is exactly the leader's {\em maximin utility}.
Indeed, \eqref{eq:arbitrary-BR-condition} is necessary for $(\xx,j)$ to be BR-inducible: if on the contrary $u^L(\xx, j) < M$, then by committing to $\yy^* \in \arg\max_{\yy\in\Delta^{m-1}} \min_{\ell \in [n]} u^L(\yy, \ell)$, the leader can obtain (at least) her maximin utility, which will be strictly larger than $u^L(\xx,j)$. 

Thus, condition \eqref{eq:arbitrary-BR-condition} gives a simple criterion for BR-inducibility.
The problem is that such $\fBR$ may be far from being one that arises from a choice of $\uff$. To alleviate this limitation, we impose a stricter condition on $\fBR$.

\paragraph{Polytopal BR Correspondence.}
In a similar vein to \citet{goldberg2018learning}, we require that, for every $\ell \in [n]$, the set of leader strategies to which $\ell$ is a best response $\fBR^{-1}(\ell) = \{\yy \in \Delta^{m-1} : \ell \in \fBR(\yy) \}$ is a closed convex polytope, and the union of all these sets forms a partition of $\Delta^{m-1}$ (for example, see the polytope partition of $\Delta^2$ in Figure~\ref{fig:no-payoff-matix}).
Any best response correspondence $\fBR$ satisfying this assumption is called {\em polytopal}.

\begin{definition}[Polytopal best response correspondence \citep{goldberg2018learning}] \label{def:pp-best-response}
A best response correspondence $\fBR: \Delta^{m-1} \rightarrow 2^{[n]} \setminus \{\emptyset\}$ is {\em polytopal} if it also satisfies the following:
\begin{itemize}
    \item $\fBR^{-1}(\ell)$ is a closed convex polytope for each $\ell \in [n]$, and
    \item For each $k \neq \ell$, either $\relint(\fBR^{-1}(k)) \cap \relint(\fBR^{-1}(\ell)) = \emptyset$ or $\fBR^{-1}(k) = \fBR^{-1}(\ell)$, where $\relint(H)$ denotes the relative interior of a set $H$.
\end{itemize}
\end{definition}

Being polytopal is necessary for $\fBR$ to arise from some payoff matrix. Indeed, the {\em true} best response correspondence $\BR$ that arises from $u^F$ is polytopal:
clearly, each $\BR^{-1}(\ell)$ is a closed convex polytope defined by the hyperplanes $u^F(\yy, \ell) \ge u^F(\yy, k)$ for all $k \in [n]$ and the borders of $\Delta^{m-1}$; in addition, $\cup_{\ell=1}^n \BR^{-1}(\ell) = \Delta^{m-1}$, and for any $\ell \neq k$, the polytopes $\BR^{-1}(\ell)$ and $\BR^{-1}(k)$ only intersect at their borders unless $u^F(\cdot,\ell) = u^F(\cdot,k)$.
Thus, if the follower attempts to deceive the leader via a fake $\fBR$, the leader might spot the deception in case $\fBR$ is not polytopal.

It turns out that the following correspondence, which we denote as $\fBR_{\text{P}}$, is polytopal and, as we will shortly show, it is in fact as powerful as any best response correspondence.
\begin{align*}
\fBR_{\text{P}}(\yy) =
\begin{cases}
\{j\} & \text{if } 
\yy \in \Delta^{m-1} \setminus \overline{U_j(\xx)} \\[1mm]
\{j\} \cup \arg\min_{\ell \in [n]\setminus \{j\}} u^L(\yy, \ell) & \text{if } \yy \in \overline{U_j(\xx)} \setminus U_j(\xx) \\[1mm]
\arg\min_{\ell \in [n]\setminus \{j\}} u^L(\yy, \ell) & \text{if } 
\yy \in U_j(\xx)
\end{cases}
\end{align*}
where $\overline{U_j(\xx)}$ is the closure of $U_j(\xx) = \left\{\yy \in \Delta^{m-1} : u^L(\yy, j) > u^L(\xx, j) \right\}$.\footnote{Note that the use of $\overline{U_j(\xx)}$, instead of the set $\left\{\yy \in \Delta^{m-1} : u^L(\yy, j) \ge u^L(\xx, j) \right\}$, is important: when $u^L(\yy,j) = u^L(\xx,j)$ for all $\yy \in \Delta^{m-1}$, these two sets define different behaviors.}
Intuitively, it is safe for the follower to respond by playing $j$ against any leader strategy $\yy$ if $u^L(\yy, j) \le u^L(\xx, j)$, in which case the leader does not have a strong incentive to commit to $\yy$ instead of $\xx$. In response to the other strategies, however, the follower needs to play a different strategy in order to minimize the leader's incentive to commit to such a $\yy$.
Therefore, this approach will successfully induce $(\xx, j)$ if and only if the following holds:
\begin{align}
\label{eq:polytopal-BR-condition}
u^L(\xx, j) \ge \max_{\yy\in \overline{U_j(\xx)}} \min_{\ell \in [n]\setminus\{j\}} u^L(\yy, \ell),
\end{align}
where we use the convention that $\max \emptyset = - \infty$. It is easy to see that $\fBR_{\text{P}}$ is indeed polytopal: 
$\fBR_{\text{P}}^{-1}(j) = \Delta^{m-1} \setminus U_j(\xx)$ is a closed convex polytope, and the same holds for the sets
$\fBR_{\text{P}}^{-1}(\ell)$ defined by the hyperplanes $u^L(\yy, \ell) \le u^L(\yy, k)$, $k\in[n]\setminus\{j\}$ and the borders of $\overline{U_j(\xx)}$, which further form a partition of $\overline{U_j(\xx)}$.

In fact, \eqref{eq:arbitrary-BR-condition} is equivalent to \eqref{eq:polytopal-BR-condition},  meaning that the extra condition imposed on $\fBR_{\text{P}}$ does not compromise its power: if $(\xx, j)$ can be induced by an arbitrary $\fBR$ then it can also be induced by $\fBR_{\text{P}}$.
We state this result in Lemma~\ref{lmm:M-V}.

\begin{lemma}
\label{lmm:M-V}
$u^L(\xx, j) \ge M$ if and only if $u^L(\xx, j) \ge \max_{\yy\in \overline{U_j(\xx)}} \min_{\ell \in [n]\setminus\{j\}} u^L(\yy, \ell)$.
\end{lemma}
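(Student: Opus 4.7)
Write $V := \max_{\yy \in \overline{U_j(\xx)}} \min_{\ell \in [n] \setminus \{j\}} u^L(\yy, \ell)$, with the convention $V = -\infty$ when $\overline{U_j(\xx)} = \emptyset$. The plan is to prove each implication separately, exploiting the linearity of $u^L$ in its first argument (hence continuity of the minima considered) and compactness of $\Delta^{m-1}$, which ensures both $M$ and $V$ are attained whenever their domains are nonempty. I will also use the observation that, as soon as $U_j(\xx) \neq \emptyset$, we have $\overline{U_j(\xx)} = \{\yy \in \Delta^{m-1}: u^L(\yy,j) \ge u^L(\xx,j)\}$, so every point of $\overline{U_j(\xx)} \setminus U_j(\xx)$ satisfies $u^L(\yy,j) = u^L(\xx,j)$.

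For the direction $u^L(\xx, j) \geq V \Rightarrow u^L(\xx, j) \geq M$, I fix a maximin strategy $\yy^* \in \arg\max_{\yy} \min_{\ell} u^L(\yy, \ell)$ so that $M = \min_{\ell \in [n]} u^L(\yy^*, \ell)$. If $u^L(\yy^*, j) \leq u^L(\xx, j)$, the chain $M \le u^L(\yy^*,j) \le u^L(\xx,j)$ finishes the job. Otherwise $\yy^* \in U_j(\xx) \subseteq \overline{U_j(\xx)}$, so $M \leq \min_{\ell \in [n]\setminus\{j\}} u^L(\yy^*, \ell) \leq V \leq u^L(\xx, j)$.

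For the reverse direction, I prove the contrapositive: if $V > u^L(\xx, j)$, then $M > u^L(\xx, j)$. Let $\yy \in \overline{U_j(\xx)}$ witness $\min_{\ell \in [n] \setminus \{j\}} u^L(\yy, \ell) > u^L(\xx, j)$. When $\yy \in U_j(\xx)$ we additionally have $u^L(\yy, j) > u^L(\xx, j)$, so combining the two strict inequalities gives $\min_{\ell \in [n]} u^L(\yy, \ell) > u^L(\xx, j)$, hence $M > u^L(\xx, j)$.

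The main obstacle is the boundary case where $\yy \in \overline{U_j(\xx)} \setminus U_j(\xx)$; by the observation above we then have $u^L(\yy, j) = u^L(\xx, j)$, so the argument above fails to produce a strict inequality at $j$. The fix is to perturb $\yy$ into the interior: pick any $\zz \in U_j(\xx)$ (nonempty since $\overline{U_j(\xx)}$ is, as $\overline{U_j(\xx)} = \emptyset$ iff $U_j(\xx) = \emptyset$) and set $\yy_t = (1-t)\yy + t\zz$. By linearity, $u^L(\yy_t, j) = (1-t)u^L(\xx,j) + tu^L(\zz,j) > u^L(\xx, j)$ for every $t \in (0,1]$, so $\yy_t \in U_j(\xx)$; and by continuity of the finite minimum $\min_{\ell \in [n] \setminus \{j\}} u^L(\cdot, \ell)$, the strict inequality $\min_{\ell \in [n] \setminus \{j\}} u^L(\yy_t, \ell) > u^L(\xx, j)$ persists for all sufficiently small $t > 0$. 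The previous (interior) argument then applies to $\yy_t$ and yields $M \ge \min_{\ell \in [n]} u^L(\yy_t, \ell) > u^L(\xx, j)$, completing the proof.
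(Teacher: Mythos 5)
Your proof is correct, and it follows essentially the same route as the paper's: both directions hinge on a maximin strategy $\yy^*$ and a continuity/perturbation argument to push a witness from the boundary of $\overline{U_j(\xx)}$ into $U_j(\xx)$. The only cosmetic differences are that the paper handles the forward implication by contrapositive where you split into two cases on whether $u^L(\yy^*,j)$ exceeds $u^L(\xx,j)$, and you spell out the perturbation $\yy_t = (1-t)\yy + t\zz$ explicitly where the paper simply invokes continuity of $\min_{\ell\ne j} u^L(\cdot,\ell)$.
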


\begin{proof}
Recall that we want to show that $u^L(\xx,j) \geq M$ if and only if
\begin{equation}
\label{eq:V-M}
u^L(\xx,j) \ge \max_{\yy \in \overline{U_j(\xx)}} \min_{\ell \in [n] \setminus \{j\}} u^L(\yy,\ell)
\end{equation}
where $M$ is the maximin utility of the leader.

We show that \eqref{eq:V-M} does not hold if and only if $u^L(\xx, j) < M$. Suppose that \eqref{eq:V-M} does not hold. Then $u^L(\xx,j) < \max_{\yy \in \overline{U_j(\xx)}} \min_{\ell \in [n] \setminus \{j\}} u^L(\yy,\ell)$ by definition, which implies that $U_j(\xx) \neq \emptyset$. By the continuity of $\min_{\ell \in [n] \setminus \{j\}} u^L(\cdot,\ell)$, there exists $\yy^* \in U_j(\xx)$ such that
\begin{align*}
u^L(\xx,j) < \min_{\ell \in [n] \setminus \{j\}} u^L(\yy^*,\ell).
\end{align*}
By the definition of $U_j(\xx)$, we also have $u^L(\xx,j) < u^L(\yy^*,j)$. Thus,
\begin{align*}
u^L(\xx,j) < \min_{\ell \in [n]} u^L(\yy^*,\ell) \le \max_{\yy \in \Delta^{m-1}} \min_{\ell \in [n]} u^L(\yy,\ell) = M.
\end{align*}

Conversely, suppose that $u^L(\xx,j) < M$. Let
$\yy^* \in {\arg\max}_{\yy \in \Delta^{m-1}} \min_{\ell \in [n]} u^L(\yy,\ell)$.
Thus, $M = \min_{\ell \in [n]} u^L(\yy^*,\ell)$, and we have
\begin{align*}
u^L(\xx,j) < M = \min_{\ell \in [n]} u^L(\yy^*,\ell) \leq u^L(\yy^*,j)
\end{align*}
which implies that $\yy^* \in U_j(\xx)$. It follows that $M = \max_{\yy \in \overline{U_j(\xx)}} \min_{\ell \in [n]} u^L(\yy,\ell)$ and thus
\begin{align*}
u^L(\xx,j) < \max_{\yy \in \overline{U_j(\xx)}} \min_{\ell \in [n]} u^L(\yy,\ell) \le \max_{\yy \in \overline{U_j(\xx)}} \min_{\ell \in [n]\setminus \{j\}} u^L(\yy,\ell),
\end{align*}
so \eqref{eq:V-M} does not hold.
\end{proof}

Using Lemma~\ref{lmm:M-V}, we can efficiently compute the best strategy profile that can be induced by $\fBR_{\text{P}}$, simply by solving the following Linear Program (LP) for each $j \in [n]$:
\begin{align}
\label{eq:LP-M}
&\max_{\xx \in \Delta^{m-1}}  \quad u^F(\xx, j)  \\
&\text{subject to} \quad 
u^L(\xx,j) \geq M \nonumber
\end{align}

\begin{figure}[t]
\center
\begin{tikzpicture}%[scale=0.5]
\node at (-5,1.1) {$u^L = 
\begin{pmatrix*}[r]
0 & 1    & 1 \\
1 & -1/2 & 1/2 \\
1 & 1/2  & -1/2
\end{pmatrix*}$};

\begin{axis}[
	%ternary limits relative,
	%title={},
	width= 50mm,
	xtick={0,1},
	ytick={0,1},
	hide obscured y ticks = false,
	ztick=\empty,
	xticklabels={$0$,$1$},
	yticklabels={$0$,$1$},
	xtick pos = right,
	area style,
	view={90}{-90},
	grid=none,
	clip=false,
	axis lines=center,
	xlabel={$y_1$}, ylabel={$y_2$}, 
	xmin = 0, xmax = 1.35,
	ymin = 0, ymax = 1.3,
]

\addplot3 coordinates {
	(1,0,0)
	(0.5, 0.5, 0)
	(0.5, 0, 0.5)
};

\addplot3 coordinates {
	(0.5, 0.25, 0.25)
	(0, 0.5, 0.5)
	(0, 0, 1)
	(0.5, 0, 0.5)
};

\addplot3 coordinates {
	(0.5, 0.25, 0.25)
	(0.5, 0.5, 0)
	(0, 1, 0)
	(0, 0.5, 0.5)
};

\node at (axis cs:0.7,0.15,0.15) {$R_1$};
\node at (axis cs:0.2,0.55,0.25) {$R_2$};
\node at (axis cs:0.2,0.2, 0.6 ) {$R_3$};

% \node at (axis cs:1.1, -0.05, -0.05) {$(1,0,0)$};
% \node at (axis cs:-0.1, 1.15, -0.05) {$(0,1,0)$};
% \node at (axis cs:-0.1, -0.05, 1.15) {$(0,0,1)$};

\node[circle,inner sep=1.2pt,fill=black] at (axis cs:0.5, 0.5, 0) {};
\node at (axis cs:0.55, 0.6, -0.15) {$\xx$};

\node[circle,inner sep=1.2pt,fill=black] at (axis cs:0.5, 0.25, 0.25) {};
\node at (axis cs:0.42, 0.22, 0.36) {$\zz$};

\node[circle,inner sep=1.2pt,fill=black] at (axis cs:0.5, 0, 0.5) {};
\node at (axis cs:0.55, -0.1, 0.55) {$\ww$};
\end{axis}
\end{tikzpicture}
\caption{No payoff matrix $\uff$ realizes the polytopal BR correspondence $\fBR_{\text{P}}$, such that $\ell \in \fBR_\text{P}$ if and only if $\yy \in R_\ell$, where 
$R_1 = \{\yy \in \Delta^2: y_1 \ge y_2 + y_3\}$,
$R_2 = \{\yy \in \Delta^2: y_1 \le y_2 + y_3 \text{ and } y_2 \ge y_3 \}$, and
$R_3 = \{\yy \in \Delta^2: y_1 \le y_2 + y_3 \text{ and } y_2 \le y_3 \}$. \label{fig:no-payoff-matix}}
\end{figure}

At this point, it might be tempting to think that with the polytopal constraint imposed, we would also be able to construct an explicit payoff matrix $\uff$ to implement $\fBR_{\text{P}}$. 
Unfortunately, this is not true as Example~\ref{exm:no-payoff-matrix} illustrates. 
Surprisingly though, in the next section we will show that, even though we cannot construct a payoff matrix that implements $\fBR_{\text{P}}$ directly, every strategy profile $(\xx,j)$ that is $\fBR_{\text{P}}$-inducible, is in fact payoff-inducible. We also present an efficient algorithm for computing a payoff matrix $\uff$ to induce such $(\xx, j)$.

\begin{example}
\label{exm:no-payoff-matrix}
Consider a $3\times 3$ game with the leader payoff matrix given in Figure~\ref{fig:no-payoff-matix}.
Let $\fBR_{\text{P}}$ be a polytopal BR correspondence defined by the regions $R_1$, $R_2$, and $R_3$ in Figure~\ref{fig:no-payoff-matix}, such that $\ell \in \fBR_\text{P}$ if and only if $\yy \in R_\ell$.
This best response behavior cannot be realized by any payoff matrix. 
To see this, suppose $\fBR_{\text{P}}$ is realized by some $\uff \in \mathbb{R}^{3\times 3}$.
Let $\xx = (\frac{1}{2}, \frac{1}{2}, 0)$, $\ww = (\frac{1}{2}, 0, \frac{1}{2})$, and $\zz = (\frac{1}{2}, \frac{1}{4}, \frac{1}{4})$.
We have $\fBR_{\text{P}}(\zz) = \{1,2,3\}$ and $\fBR_{\text{P}}(\ww) = \{1,3\}$.
This means 
that $u^L(\zz,\, 1) = u^L(\zz,\, 3) = u^L(\zz,\, 2)$ and
$u^L(\ww, 1) = u^L(\ww, 3) > u^L(\ww, 2)$.
Since $\xx = 2 \zz - \ww$, by the linearity of the utility function,
$u^L(\xx, 1) = u^L(\xx, 3) < u^L(\xx, 2)$,
which contradicts the fact that $\fBR_{\text{P}}(\xx) = \{1,2\}$.
\hfill $\qed$
\end{example}

\section{Payoff Inducibility} 
\label{sec:payoff}

In this section, we will show that every profile strategy that can be induced by $\fBR_{\text{P}}$ is also payoff-inducible, and a corresponding payoff matrix can be efficiently constructed. Recall that the maximin utility of the leader is denote by $M = \max_{\yy\in\Delta^{m-1}} \min_{\ell \in [n]} u^L(\yy, \ell)$. 
We will show the following characterization as one of our key results, which enables us to use the LP in \eqref{eq:LP-M} to efficiently compute a payoff matrix that achieves the optimal inducible utility.

\begin{theorem}
\label{thm:payoff-inducibility-M}
A strategy profile $(\xx,j)$ is payoff-inducible if and only if $u^L(\xx,j) \geq M$. Furthermore, a matrix $\uff$ inducing $(\xx, j)$ can be constructed in polynomial time. 
\end{theorem}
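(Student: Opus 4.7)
My plan is to prove the two directions separately. For the ``only if'' direction, suppose $(\xx, j)$ is payoff-inducible via some $\uff$, so $(\xx, j)$ is an SSE of $\widetilde{\calG} = (u^L, \uff)$. Let $\yy^* \in \argmax_{\yy} \min_\ell u^L(\yy, \ell)$ be a leader maximin strategy, so $\min_\ell u^L(\yy^*, \ell) = M$. For any $\ell^* \in \BR_{\uff}(\yy^*)$, the pair $(\yy^*, \ell^*)$ is feasible in the SSE optimization, so $u^L(\yy^*, \ell^*) \leq u^L(\xx, j)$; combined with $u^L(\yy^*, \ell^*) \geq \min_\ell u^L(\yy^*, \ell) = M$, this gives $u^L(\xx, j) \geq M$.

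For the ``if'' direction, I would construct $\uff$ inducing $(\xx, j)$ given $u^L(\xx, j) \geq M$. The construction must enforce two properties: (i) $j$ is a follower best response at $\xx$ with leader tie-breaking favoring $j$, and (ii) for every $\yy$, the follower's best response yields leader utility at most $u^L(\xx, j)$. Lemma~\ref{lmm:M-V} gives $u^L(\xx, j) \geq \max_{\yy \in \overline{U_j(\xx)}} \min_{\ell \neq j} u^L(\yy, \ell)$. Applying the minimax theorem (Sion's) to the zero-sum subgame on $\overline{U_j(\xx)} \times ([n] \setminus \{j\})$ then yields a follower distribution $\tau^* \in \Delta^{n-2}$ supported on $[n] \setminus \{j\}$ satisfying $\sum_{\ell \neq j} \tau^*_\ell\, u^L(\yy, \ell) \leq u^L(\xx, j)$ for all $\yy \in \overline{U_j(\xx)}$; computing $\tau^*$ is a polynomial-size linear program.

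Using $\tau^*$, I would define $\uff$ so that the follower's best response at $\yy \in U_j(\xx)$ lies in $\mathrm{supp}(\tau^*)$ (capping the leader's utility by $u^L(\xx, j)$), while at $\xx$ and at $\yy \notin \overline{U_j(\xx)}$ the follower plays $j$ (safe because $u^L(\yy, j) \leq u^L(\xx, j)$ in those cases). A concrete realization involves setting $\uff(i, j) = 0$ and $\uff(i, \ell) = \tau^*_\ell\,(u^L(\xx, j) - u^L(i, \ell))$ for $\ell \neq j$, augmented by a small linear perturbation to break ties in favor of safe responses. By the defining inequality of $\tau^*$, the sum $\sum_{\ell \neq j} \uff(\yy, \ell) \geq 0$ on $\overline{U_j(\xx)}$, so some $\uff(\yy, \ell) \geq 0 = \uff(\yy, j)$, and the perturbation makes this strict, forcing the follower into a safe response whenever $\yy \in U_j(\xx)$.

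The hardest part is handling the boundary $\partial U_j(\xx)$, where a naive $\uff$ can leave the follower indifferent between $j$ and some $\ell$ with $u^L(\yy, \ell) = u^L(\xx, j)$, allowing the leader's tie-breaking rule to resolve in favor of $j$ and thereby bump the SSE value above $u^L(\xx, j)$. A careful linear perturbation, tuned so as not to break (i) or the bound in (ii), resolves this issue. Together with the LP computation of $\tau^*$ and the closed-form definition of $\uff$, this yields a polynomial-time construction of a matrix inducing $(\xx, j)$.
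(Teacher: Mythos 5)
Your ``only if'' direction is correct and matches the paper's reasoning: the paper observes that payoff-inducibility implies BR-inducibility and that the leader could always fall back to her maximin strategy, which is exactly the argument you spell out directly.

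The ``if'' direction, however, has a genuine gap. The minimax step itself is sound: since $\overline{U_j(\xx)}$ is compact and convex (assuming it is nonempty; you should handle $U_j(\xx)=\emptyset$ separately, as the paper does), LP duality does give a $\tau^* \in \Delta^{n-2}$ supported on $[n]\setminus\{j\}$ with $\sum_{\ell \neq j}\tau^*_\ell u^L(\yy,\ell) \le u^L(\xx,j)$ for all $\yy \in \overline{U_j(\xx)}$. The problem is the concrete payoff matrix you propose. With $\uff(\cdot,j)\equiv 0$ and $\uff(i,\ell) = \tau^*_\ell\,(u^L(\xx,j)-u^L(i,\ell))$, nothing forces $j$ to be a best response at $\xx$. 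Indeed, at $\yy=\xx$ the defining inequality for $\tau^*$ says that the $\tau^*$-weighted \emph{average} of the $\uff(\xx,\ell)$ is $\ge 0$, which is the opposite sign from what you need ($\uff(\xx,\ell)\le 0$ for all $\ell$). Concretely, take $m=n=2$, $j=1$,
\[
u^L = \begin{pmatrix} 1 & -1 \\ 0 & 1 \end{pmatrix}, \qquad \xx = (1/2,1/2),
\]
so that with $y$ the probability on row $1$ we have $u^L(y,1)=y$, $u^L(y,2)=1-2y$, $M=1/3$ and $u^L(\xx,1)=1/2\ge M$. Here $\tau^*$ must put all mass on column $2$, and your formula yields $\uff(y,1)=0$, $\uff(y,2)=2y-1/2$. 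At $\xx$ the follower \emph{strictly} prefers column $2$ ($1/2>0$); in fact the unique SSE of $(u^L,\uff)$ is $\bigl((1/4,3/4),\,2\bigr)$, not $(\xx,1)$. The failure is by a macroscopic margin, so no ``small linear perturbation'' can repair it; and when $\tau^*_\ell=0$ the column $\ell$ is identically tied with $j$, which creates further uncontrolled ties.

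This is precisely what the paper's construction is engineered to avoid. Rather than a cap-style offset, the paper chooses a subset $\widehat{S}$ via a Farkas-lemma decomposition $-\gv_j = \sum_{\ell\in S}\lambda_\ell \gv_\ell + \sum_{i\in B}\mu_i\ee_i$, sets $\uff(\yy,\ell)=-u^L(\yy,\ell)$ on $\widehat{S}$ (unshifted and unscaled, so that the follower's tie-breaking among $\widehat{S}$ automatically picks the column that is worst for the leader), explicitly dominates all columns outside $\widehat{S}\cup\{j\}$, and defines $\uff(\yy,j) = -u^L(\yy,k) + \alpha\bigl(u^L(\xx,j)-u^L(\yy,j)\bigr)$ for $k\in\argmin_{\ell\in\widehat{S}}u^L(\xx,\ell)$ and $\alpha = 1/\lambda_k$, so that $j$ beats $k$ precisely on $\{\yy : u^L(\yy,j)\le u^L(\xx,j)\}$. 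Your minimax weights $\tau^*$ play a role loosely analogous to the $\lambda_\ell$, but they do not furnish the scaling $\alpha$ nor the boundary multipliers $\mu_i$, and these are exactly what is needed to make $j$ a best response at $\xx$ while capping $u^L(\yy,\ell)\le u^L(\xx,j)$ whenever some $\ell\in\widehat{S}$ is a best response.
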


One direction of the characterization is easy to show. Indeed, if $(\xx,j)$ is payoff-inducible, then it is also BR-inducible, and as seen in Section~\ref{sec:best-response}, it holds that $u^L(\xx,j) \geq M$.

Now consider any profile $(\xx,j)$ such that $u^L(\xx,j) \geq M$.
Recall that $U_j(\xx) = \{\yy \in \Delta^{m-1} : u^L(\yy,j) > u^L(\xx,j)\}$.
Without loss of generality, in what follows, we can also assume that $U_j(\xx) \neq \emptyset$: if $U_j(\xx) = \emptyset$, then $(\xx,j)$ will be an SSE if the follower always responds by playing $j$; this can easily be achieved by claiming that $j$ strictly dominates all other strategies, i.e., by letting $\uff(i,j) = 1$ and $\uff(i,\ell)=0$ for all $i \in [m]$ and $\ell \in [n] \setminus \{j\}$.

We begin by analyzing the following payoff function that forms the basis of our approach.
Let $\widehat{S} \subseteq [n] \setminus \{j\}$ and pick $k \in \argmin_{\ell \in \widehat{S}} u^L(\xx,\ell)$ arbitrarily. For all $\yy \in \Delta^{m-1}$, let
\begin{equation}
\label{eq:uF-theta-def}
\uff(\yy,\ell) = \left\{\begin{tabular}{ll}
$-u^L(\yy,\ell)$ & if $\ell \in \widehat{S}$\\
$-u^L(\yy,k) - 1$ & if $\ell \in [n] \setminus (\widehat{S} \cup \{j\})$\\
$-u^L(\yy,k) + \alpha \left(u^L(\xx,j) - u^L(\yy,j)\right)$ & if $\ell=j$
\end{tabular}\right.
\end{equation}
where $\alpha > 0$ is a constant. 
In what follows, we will let $\fBR$ denote the best response correspondence corresponding to $\uff$, i.e., $\fBR(\yy) = \argmax_{\ell\in[n]} \uff(\yy, \ell)$.
Note that we can compute the payoff matrix corresponding to $\uff$ in polynomial time.
Then, the hope is that with appropriately chosen $\widehat{S}$ and $\alpha$, the payoff matrix will induce $(\xx, j)$.  
Indeed, $\uff$ has the following nice properties:

\begin{enumerate}
\item[i.]
Strategy $j$ is indeed a best response to $\xx$, since, by the choice of $k$ we have
$$\uff(\xx,j)
= - u^L(\xx,k)
\geq - \min_{\ell \in \widehat{S}} u^L(\xx,\ell)
= \max_{\ell \in \widehat{S}} \uff(\xx,\ell).
$$

\item[ii.] 
Any $\ell \in [n] \setminus (\widehat{S} \cup \{j\})$ cannot be a best response of the follower as it is strictly dominated by $k$, i.e., $\uff(\yy, \ell) < \uff(\yy, k)$ for all $\yy \in \Delta^{m-1}$. Thus, $\fBR(\yy) \subseteq \widehat{S} \cup \{j\}$ for all $\yy \in \Delta^{m-1}$.
	
\item[iii.]
If $j$ is a best response to some $\yy \in \Delta^{m-1}$, then $u^L(\yy, j) \le u^L(\xx, j)$. Indeed, $j \in \fBR(\yy)$ implies that 
$$\uff(\yy,j) = \max_{\ell \in [n]} \uff(\yy,\ell) \geq \uff(\yy,k).$$
Substituting $\uff(\yy, j) = -u^L(\yy,k) + \alpha \left(u^L(\xx,j) - u^L(\yy,j)\right)$ into this inequality and rearranging the terms immediately gives $u^L(\yy, j) \le u^L(\xx, j)$.

\item[iv.]
If any $\ell \in \widehat{S}$ is a best response to some $\yy \in \Delta^{m-1}$, then it holds that
$\uff(\yy,\ell) = \max_{\ell' \in  \widehat{S}} \uff(\yy,\ell')$, which implies that
\begin{align}
\label{eq:uL-eq-minoverhatS}
u^L(\yy,\ell) = \min_{\ell' \in \widehat{S}} u^L(\yy,\ell').
\end{align}
\end{enumerate}

Therefore, if the following also holds for the $\yy$ in (iv), 
\begin{align*}
\min_{\ell' \in \widehat{S}} u^L(\yy,\ell') \le u^L(\xx, j),
\end{align*}
then by \eqref{eq:uL-eq-minoverhatS} we will have $u^L(\yy, \ell) \le u^L(\xx, j)$ for every $\ell \in \fBR(\yy) \cap \widehat{S}$. This, together with (ii) and (iii), will imply that $u^L(\xx, j) \ge u^L(\yy, \ell)$ for every $\ell \in \fBR(\yy)$. Therefore, $(\xx, j)$ will indeed form an SSE given that $j \in \fBR(\xx)$ by (i).
We state this observation as the following lemma.

\begin{lemma}
\label{lmm:key-condition}
If $\min_{\ell' \in \widehat{S}} u^L(\yy,\ell') \le u^L(\xx, j)$ holds for all $\yy\in\Delta^{m-1}$ such that $\fBR(\yy) \cap \widehat{S} \neq \emptyset$, then the payoff matrix defined by \eqref{eq:uF-theta-def} induces $(\xx, j)$.
\end{lemma}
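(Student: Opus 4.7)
The plan is to directly verify that the strategy profile $(\xx, j)$ satisfies the SSE definition for $\widetilde{\calG} = (u^L, \uff)$, namely that
\[
(\xx, j) \in \arg\max_{\yy \in \Delta^{m-1},\, \ell \in \fBR(\yy)} u^L(\yy, \ell).
\]
This decomposes naturally into two obligations: (a) $j$ must itself be a best response to $\xx$ under $\uff$, so that $(\xx, j)$ is an eligible pair; and (b) every other eligible pair $(\yy, \ell)$ must give the leader utility no larger than $u^L(\xx, j)$. Obligation (a) follows immediately from property (i), which explicitly asserts $j \in \fBR(\xx)$.

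For (b), the strategy is a short case analysis on $\ell \in \fBR(\yy)$, made possible because property (ii) narrows the possibilities to $\fBR(\yy) \subseteq \widehat{S} \cup \{j\}$. If $\ell = j$, property (iii) directly yields $u^L(\yy, j) \le u^L(\xx, j)$. If instead $\ell \in \widehat{S}$, then $\fBR(\yy) \cap \widehat{S} \neq \emptyset$, so the lemma's hypothesis applies and gives $\min_{\ell' \in \widehat{S}} u^L(\yy, \ell') \le u^L(\xx, j)$. I would then chain this with property (iv), which says $u^L(\yy, \ell) = \min_{\ell' \in \widehat{S}} u^L(\yy, \ell')$, to conclude $u^L(\yy, \ell) \le u^L(\xx, j)$ in this subcase as well. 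Combining both cases discharges obligation (b).

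Since the four properties (i)--(iv) have already been established in the paragraphs preceding the lemma, the proof is essentially a bookkeeping exercise of stitching them together with the hypothesis. I do not foresee a substantial obstacle; the only thing to be careful about is that the case split is genuinely exhaustive, which is guaranteed by property (ii), and that the SSE tie-breaking is not an issue, since we are producing a $u^L$-maximum among \emph{all} eligible $(\yy,\ell)$ pairs rather than just among best responses at a fixed $\yy$.
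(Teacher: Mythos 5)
Your proof is correct and follows essentially the same route as the paper: it invokes properties (i)--(iv) established just before the lemma statement and combines them with the hypothesis via the same case split on $\ell = j$ versus $\ell \in \widehat{S}$, with property (ii) guaranteeing exhaustiveness. The paper presents this argument informally in the paragraph preceding the lemma, so your reconstruction matches it in substance and structure.
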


The proof of Theorem~\ref{thm:payoff-inducibility-M} is then completed by showing the following result.

\begin{proposition}\label{prp:key-condition-holds}
If $u^L(\xx,j) \geq M$ and $U_j(x) \neq \emptyset$, then we can construct $\widehat{S} \subseteq [n] \setminus \{j\}$ and $\alpha > 0$ in polynomial time, with which the condition of Lemma~\ref{lmm:key-condition} holds for $\uff$ as defined in \eqref{eq:uF-theta-def}.
\end{proposition}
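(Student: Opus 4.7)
The plan is to take $\widehat{S} := [n] \setminus \{j\}$ as the default choice of $\widehat{S}$ and to then pick $\alpha > 0$ via a linear-fractional program derived from Lemma~\ref{lmm:key-condition}. I would begin by introducing the ``danger region''
$$A := \{\yy \in \Delta^{m-1} : \min_{\ell \in \widehat{S}} u^L(\yy, \ell) > u^L(\xx, j)\},$$
and observing, via Lemma~\ref{lmm:M-V} applied to the hypothesis $u^L(\xx, j) \ge M$, that $A \cap \overline{U_j(\xx)} = \emptyset$. In particular, every $\yy \in A$ satisfies $u^L(\xx, j) - u^L(\yy, j) > 0$, so the only points at which the condition of Lemma~\ref{lmm:key-condition} can be tested non-trivially lie strictly outside $\overline{U_j(\xx)}$.

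Next I would translate the condition of Lemma~\ref{lmm:key-condition} into a quantitative inequality on $\alpha$. Since strategies in $[n] \setminus (\widehat{S} \cup \{j\})$ are strictly dominated by $k$ under \eqref{eq:uF-theta-def}, the condition can fail at $\yy \in A$ only when some $\ell \in \widehat{S}$ weakly beats $j$ in $\uff(\yy, \cdot)$. Substituting \eqref{eq:uF-theta-def} and minimizing over $\ell \in \widehat{S}$ reduces this to the requirement
$$\alpha \cdot \bigl(u^L(\xx, j) - u^L(\yy, j)\bigr) > u^L(\yy, k) - \min_{\ell' \in \widehat{S}} u^L(\yy, \ell') \qquad \text{for every } \yy \in A.$$
Hence it suffices to take $\alpha$ strictly larger than $\sup_{\yy \in A} g(\yy)$, where $g(\yy)$ is the ratio of the right-hand side to the coefficient on the left. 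Both the numerator and the denominator of $g$ are piecewise linear in $\yy$, and partitioning $\overline{A}$ into the polyhedral cells induced by $\argmin_{\ell \in \widehat{S}} u^L(\yy, \cdot)$ turns the computation of the supremum into polynomially many linear-fractional programs, each solvable by an LP.

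The main obstacle is the boundary behaviour of $g$: in the tight case $u^L(\xx, j) = M$, the closure $\overline{A}$ can meet $\partial \overline{U_j(\xx)}$ at points $\yy^* \neq \xx$ where the denominator of $g$ vanishes. At each such $\yy^*$, the inclusions $\yy^* \in \overline{U_j(\xx)}$ and $\yy^* \in \overline{A}$ combined with Lemma~\ref{lmm:M-V} force $\min_{\ell \in \widehat{S}} u^L(\yy^*, \ell) = u^L(\xx, j)$, i.e.\ $\yy^*$ is itself a maximin leader strategy; for $g$ to stay bounded at $\yy^*$, $k$ must also attain this minimum. I would resolve this by iteratively refining $(\widehat{S}, k)$: whenever an LP witnesses a problematic $\yy^*$ with $u^L(\yy^*, k) > u^L(\xx, j)$, remove from $\widehat{S}$ those strategies in $\argmin_{\ell' \in \widehat{S}} u^L(\yy^*, \ell')$ that differ from $k$, recompute $k$ on the shrunken set, and repeat, after checking that the inequality of Lemma~\ref{lmm:M-V} persists (a book-keeping step that uses the maximin characterisation of $\yy^*$ crucially). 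Since each iteration strictly shrinks $\widehat{S}$, the refinement terminates after at most $n - 1$ steps, yielding $\widehat{S}$, $k$, and, via one final LP, an explicit $\alpha > \sup_{\yy \in A} g(\yy)$, all computable in polynomial time.
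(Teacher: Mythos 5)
Your opening moves match the paper's in spirit: passing from the condition of Lemma~\ref{lmm:key-condition} to the statement that $A := \{\yy : \min_{\ell \in \widehat{S}} u^L(\yy,\ell) > u^L(\xx,j)\}$ must satisfy $\fBR(\yy)\cap\widehat{S}=\emptyset$, observing via Lemma~\ref{lmm:M-V} that $A\cap\overline{U_j(\xx)}=\emptyset$ when $\widehat{S}=[n]\setminus\{j\}$, and recasting the requirement as a lower bound on $\alpha$ by a ratio $g$. You also correctly locate the hard case: a point $\yy^*\in\overline{A}\cap\overline{U_j(\xx)}$ where the denominator of $g$ vanishes, and you correctly diagnose that boundedness there would require $k$ to attain $\min_{\ell\in\widehat{S}}u^L(\yy^*,\ell)$.

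However, the proposed fix is unsound, and this is not a cosmetic gap — it is the step where the proposition actually has content. Removing from $\widehat{S}$ the strategies in $\argmin_{\ell'\in\widehat{S}}u^L(\yy^*,\ell')$ strictly raises $\min_{\ell\in\widehat{S}_{\mathrm{new}}}u^L(\yy^*,\ell)$ above $u^L(\xx,j)$; hence $\yy^*$, which lies in $\overline{U_j(\xx)}$, now lies inside the \emph{new} $A$. This destroys the invariant $A\cap\overline{U_j(\xx)}=\emptyset$ that your whole argument relies on: at such a $\yy^*$ the ``denominator'' $u^L(\xx,j)-u^L(\yy^*,j)$ is $\leq 0$, so no $\alpha>0$ can satisfy the required inequality, and the sign flip cannot be repaired by further shrinking $\widehat{S}$ (which only makes $A$ larger). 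Equivalently, the analog of Lemma~\ref{lmm:M-V} with $[n]\setminus\{j\}$ replaced by a proper subset simply does not hold, so the ``book-keeping step'' you gesture at cannot succeed. Moreover, even if a single $\yy^*$ were handled, nothing controls the family of \emph{all} problematic boundary points simultaneously, so the termination argument (``strictly shrinks $\widehat{S}$'') does not establish correctness of the final $(\widehat{S},k,\alpha)$.

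The paper resolves exactly this obstruction in the opposite way. Rather than pruning away the $\yy^*$-minimizers, it restricts $\widehat{S}$ to be a \emph{subset of} the $\yy^*$-minimizers $S$ (so $u^L(\yy^*,k)=\min_{\ell\in\widehat{S}}u^L(\yy^*,\ell)=V$, making the numerator of $g$ vanish at $\yy^*$), and it proves the crucial Lemma~\ref{lmm:super-claim} guaranteeing that even this smaller $\widehat{S}$ keeps $A$ disjoint from $U_j(\xx)$. The specific subset and the value of $\alpha$ are then not obtained by a linear-fractional program but read off a Farkas decomposition $-\gv_j=\sum_{\ell\in S}\lambda_\ell\gv_\ell+\sum_{i\in B}\mu_i\ee_i$ (Lemma~\ref{lmm:non-neg-linear-comb}), with $\widehat{S}=\{\ell\in S:\lambda_\ell>0\}$ and $\alpha=1/\lambda_k$; the final verification is a short direct computation, not a supremum bound. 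If you want to pursue the ratio/LFP route, you would still need something equivalent to Farkas' lemma plus Lemma~\ref{lmm:super-claim} to certify that a $\widehat{S}\subseteq S$ simultaneously (i) makes $g$ bounded on $\overline{A}$ and (ii) preserves $A\cap U_j(\xx)=\emptyset$; the iterative pruning as written achieves neither.
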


The proof relies on the following useful lemma.

\begin{lemma}[Farkas' Lemma~\citep{Boyd2014convex}]\label{lmm:farkas}
Let $\mathbf{A} \in \mathbb{R}^{n_1 \times n_2}$ and $\bb \in \mathbb{R}^{n_1}$. Then exactly one of the following statements is true:
\begin{enumerate}
	\item there exists $\zz \in \mathbb{R}^{n_2}$ such that $\mathbf{A} \zz= \bb$ and $\zz \geq 0$;
	\item there exists $\zz \in \mathbb{R}^{n_1}$ such that $\mathbf{A}^\mathsf{T} \zz \geq 0$ and $\bb \cdot \zz < 0$.
\end{enumerate}
\end{lemma}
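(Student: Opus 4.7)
The plan is to reduce the construction of $\widehat{S}$ and $\alpha$ to the feasibility of a linear program. Specifically, I will first seek a distribution $\mathbf{q}$ over $[n]\setminus\{j\}$ together with a scalar $\beta \geq 0$ satisfying
\begin{equation*}
\sum_{\ell\neq j} q_\ell\, u^L(\yy,\ell) + \beta\, u^L(\yy,j) \;\leq\; (1+\beta)\, u^L(\xx,j) \qquad \text{for all } \yy \in \Delta^{m-1}. \tag{$\star$}
\end{equation*}
Because $u^L(\yy,\cdot)$ is linear in $\yy$, it suffices to enforce $(\star)$ at the pure strategies $\yy = \ee_i$, giving a finite LP with $O(m+n)$ constraints and $n$ variables, solvable in polynomial time. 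From any feasible $(\mathbf{q},\beta)$ I will set $\widehat{S} := \mathrm{supp}(\mathbf{q})$, pick $k \in \argmin_{\ell\in\widehat{S}} u^L(\xx,\ell)$ as in \eqref{eq:uF-theta-def} (so $q_k > 0$), and choose $\alpha := \beta/q_k$ if $\beta > 0$, and $\alpha := 1$ if $\beta = 0$.

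For the feasibility of $(\star)$, I argue by contradiction using \cref{lmm:farkas}. An infeasibility certificate would produce $\tilde\yy\in\Delta^{m-1}$ and $\tilde\mu\in\mathbb{R}$ satisfying $u^L(\tilde\yy,\ell)\geq -\tilde\mu > u^L(\xx,j)$ for every $\ell\neq j$ and $u^L(\tilde\yy,j)\geq u^L(\xx,j)$. Using the hypothesis $U_j(\xx)\neq\emptyset$, pick any $\yy^+\in U_j(\xx)$ and form $\yy_\lambda := \lambda\yy^+ + (1-\lambda)\tilde\yy$: for sufficiently small $\lambda>0$ we get $u^L(\yy_\lambda,j)>u^L(\xx,j)$, and by continuity $u^L(\yy_\lambda,\ell)>u^L(\xx,j)$ for all $\ell\neq j$ as well, so $M \geq \min_\ell u^L(\yy_\lambda,\ell) > u^L(\xx,j)$, contradicting $u^L(\xx,j)\geq M$.

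To verify \cref{lmm:key-condition} for this construction, fix $\yy$ with $\fBR(\yy)\cap\widehat{S}\neq\emptyset$. If $\min_{\ell\in\widehat{S}} u^L(\yy,\ell)\leq u^L(\xx,j)$ the condition is immediate, so assume the opposite. Combining the bound $\sum_{\ell\in\widehat{S}} q_\ell\, u^L(\yy,\ell)\geq q_k u^L(\yy,k) + (1-q_k)\min_{\ell\in\widehat{S}} u^L(\yy,\ell)$ with $(\star)$ and rearranging yields
\begin{equation*}
q_k \bigl(u^L(\yy,k) - \min_{\ell\in\widehat{S}} u^L(\yy,\ell)\bigr) \;\leq\; \bigl(u^L(\xx,j) - \min_{\ell\in\widehat{S}} u^L(\yy,\ell)\bigr) + \beta\bigl(u^L(\xx,j) - u^L(\yy,j)\bigr).
\end{equation*}
Since the standing assumption makes the first term on the right-hand side strictly negative, we obtain $q_k\bigl(u^L(\yy,k)-\min_{\ell\in\widehat{S}} u^L(\yy,\ell)\bigr) < \beta\bigl(u^L(\xx,j)-u^L(\yy,j)\bigr)$. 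As the left-hand side is non-negative, this also forces $u^L(\yy,j)<u^L(\xx,j)$, so dividing by $q_k>0$ and invoking $\alpha=\beta/q_k$ gives $\alpha(u^L(\xx,j)-u^L(\yy,j)) > u^L(\yy,k)-u^L(\yy,\ell)$ for every $\ell\in\widehat{S}$; unwinding \eqref{eq:uF-theta-def} this reads $\uff(\yy,j)>\uff(\yy,\ell)$ for every $\ell\in\widehat{S}$, so $\fBR(\yy)\cap\widehat{S}=\emptyset$, a contradiction.

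The main obstacle is exactly this last case, where $\yy$ lies just outside $\overline{U_j(\xx)}$: a naive choice such as $\alpha=\beta$ would leave the threshold $(u^L(\yy,k)-\min_{\ell\in\widehat{S}} u^L(\yy,\ell))/(u^L(\xx,j)-u^L(\yy,j))$ potentially unbounded as $\yy$ approaches the boundary of $U_j(\xx)$. The factor $1/q_k$ extracted from the support structure of the LP solution is precisely what absorbs this blow-up and yields a finite, polynomially-bounded $\alpha$.
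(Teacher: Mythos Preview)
Your proposal is not a proof of the stated lemma at all. The statement you were given is Farkas' Lemma (\cref{lmm:farkas}), which the paper simply cites from \citet{Boyd2014convex} without proof; it is a classical result and the paper treats it as such. What you have written is instead a proof of \cref{prp:key-condition-holds}, and you explicitly \emph{invoke} \cref{lmm:farkas} in the course of it. So as a response to the stated task, the proposal is off target.

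That said, as a proof of \cref{prp:key-condition-holds} your argument is correct and takes a genuinely different route from the paper. The paper works geometrically: it locates the maximizer $\yy^*$ of $\min_{\ell\neq j}u^L(\cdot,\ell)$ over $\overline{U_j(\xx)}$, defines $S=\{\ell:u^L(\yy^*,\ell)=V\}$, establishes the structural Lemmas~\ref{lmm:key-claim-yystar} and~\ref{lmm:super-claim}, and then applies Farkas at $\yy^*$ (bringing in the tight boundary directions $\ee_i$) to write $-\gv_j$ as a nonnegative combination $\sum_{\ell\in S}\lambda_\ell\gv_\ell+\sum_{i\in B}\mu_i\ee_i$, setting $\widehat{S}=\{\ell:\lambda_\ell>0\}$ and $\alpha=1/\lambda_k$. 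Your approach bypasses $\yy^*$, $S$, and both structural lemmas entirely: you pose a single global LP $(\star)$ in the variables $(\mathbf{q},\beta)$, use Farkas (correctly deriving the separating $\tilde\yy\in\Delta^{m-1}$ after normalization) together with $U_j(\xx)\neq\emptyset$ and $u^L(\xx,j)\ge M$ to certify feasibility, and then read off $\widehat{S}=\mathrm{supp}(\mathbf{q})$ and $\alpha=\beta/q_k$. The verification of \cref{lmm:key-condition} via the averaging bound $\sum_{\ell}q_\ell u^L(\yy,\ell)\ge q_k u^L(\yy,k)+(1-q_k)\min_{\ell}u^L(\yy,\ell)$ is clean and yields exactly the same conclusion $\uff(\yy,j)>\uff(\yy,\ell)$ for $\ell\in\widehat{S}$ as the paper's endgame. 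Your route is shorter and more algorithmic; the paper's route exposes more of the geometry (the role of $\yy^*$ on the face $u^L(\cdot,j)=u^L(\xx,j)$ and the active boundary constraints). One minor wrinkle: when $\beta=0$ you set $\alpha=1$, but then later write ``invoking $\alpha=\beta/q_k$''; this is harmless because in that case your displayed inequality already gives the contradiction $0\le q_k(\cdot)<0$ before $\alpha$ is used, but it would be cleaner to say so explicitly.
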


\begin{proof}[Proof of Proposition~\ref{prp:key-condition-holds}]
Consider any strategy profile $(\xx,j)$ with $u^L(\xx,j) \geq M$ and $U_j(x) \neq \emptyset$.
We begin by taking care of a simple case, as an immediate corollary of Lemma~\ref{lmm:key-condition}.
\begin{corollary}
\label{crl:easy-case}
A matrix $\uff$ that induces $(\xx,j)$ can be constructed in polynomial time if it holds that
\begin{align}
\label{eq:easy-case-condition}
u^L(\xx,j) \geq M_{-j} := \max_{\yy \in \Delta^{m-1}} \min_{\ell \in  [n] \setminus \{j\}} u^L(\yy,\ell).
\end{align}
\end{corollary}
\begin{proof}
Let $\widehat{S} = [n] \setminus \{j\}$. Then, for every $\yy \in \Delta^{m-1}$, we immediately obtain that
\begin{align*}
u^L(\xx,j) \geq \max_{\yy \in \Delta^{m-1}} \min_{\ell \in  [n] \setminus \{j\}} u^L(\yy,\ell) \ge \min_{\ell \in  \widehat{S}} u^L(\yy,\ell)
\end{align*} 
By Lemma~\ref{lmm:key-condition}, the payoff matrix defined by \eqref{eq:uF-theta-def} (with, say, $\alpha=1$) then induces $(\xx, j)$, and can clearly be computed in polynomial time.
\end{proof}

The more challenging case is when \eqref{eq:easy-case-condition} does not hold (e.g., the case with the profile $(\xx, 1)$ in Example~\ref{exm:no-payoff-matrix}).
In what follows, we prove Proposition~\ref{prp:key-condition-holds} by showing that there is still a choice of $\widehat{S}$ and $\alpha$ that leads to the condition in Lemma~\ref{lmm:key-condition}, even when \eqref{eq:easy-case-condition} does not hold. Thus, from now on, we assume that
\begin{equation}\label{eq:assumption-M-j}
u^L(\xx,j) < M_{-j}.
\end{equation}

We define the following useful components. By Lemma~\ref{lmm:M-V} and the assumption that $u^L(\xx, j) \ge M$, we know that
\begin{equation}\label{eq:at-least-V}
u^L(\xx,j) \geq V
\end{equation}
where
\begin{align*}
V = \max_{\yy \in \overline{U_j(\xx)}} \min_{\ell \in [n] \setminus \{j\}} u^L(\yy,\ell).
\end{align*}
Since $\overline{U_j(\xx)}\neq \emptyset$, there exists $\yy^* \in \overline{U_j(\xx)}$ such that
\begin{equation}
\label{eq:def-y-star}
\min_{\ell \in [n] \setminus \{j\}} u^L(\yy^*,\ell) = V,
\end{equation}
which can be computed efficiently by solving an LP (i.e., maximize $\mu$, subject to $\mu \le u^L(\yy, \ell)$ for all $\ell \in [n] \setminus \{j\}$ and $\yy \in \overline{U_j(\xx)}$).
We then let 
\begin{align*}
S = \{\ell \in [n] \setminus \{j\} \, | \, u^L(\yy^*,\ell) = V\}.
\end{align*}

Before we proceed, we prove two useful technical results.
\begin{lemma}
\label{lmm:key-claim-yystar}
$u^L(\yy^*,j) = u^L(\xx,j)$.
\end{lemma}
\begin{proof}[Proof]

For the sake of contradiction, suppose that $u^L(\yy^*,j) \neq u^L(\xx,j)$. Since $\yy^* \in \overline{U_j(\xx)}$, we have that $u^L(\yy^*,j) \ge u^L(\xx,j)$, so it must be that $u^L(\yy^*,j) > u^L(\xx,j)$.

The assumption \eqref{eq:assumption-M-j} that $u^L(\xx,j) < M_{-j}$ implies that there exists $\hat{\yy} \in \Delta^{m-1}$ such that 
\[
\min_{\ell \in [n]\setminus\{j\}} u^L(\hat{\yy}, \ell) > u^L(\xx,j) \ge V,
\]
where we also use \eqref{eq:at-least-V}. 
Now that $\min_{\ell \in [n]\setminus\{j\}} u^L(\yy^*, \ell) = V$ by \eqref{eq:def-y-star}, by the concavity of $\min_{\ell\in[n]\setminus\{j\}} u^L(\cdot, \ell)$, it follows that $\min_{\ell\in[n]\setminus\{j\}} u^L(\zz, \ell) > V$ for all $\zz$ on the segment $[\hat{\yy}, \yy^*)$; $\zz\in\Delta^{m-1}$ as $\Delta^{m-1}$ is convex.
Now that we have $u^L(\yy^*,j) > u^L(\xx,j)$ under our assumption, when $\zz$ is sufficiently close to $\yy^*$, we can have $u^L(\zz,j) \ge u^L(\xx,j)$ and hence, $\zz \in \overline{U_j(\xx)}$. This leads to the contradiction that 
\begin{align*}
V 
=\max_{\yy\in \overline{U_j(\xx)}} \min_{\ell\in[n]\setminus\{j\}} u^L(\yy, \ell) 
\ge \min_{\ell\in[n]\setminus\{j\}} u^L(\zz, \ell) 
> V. &\qedhere
\end{align*}
\end{proof}

\begin{lemma}
\label{lmm:super-claim}
$\min_{\ell \in S} u^L(\yy,\ell) < V$ for all $\yy \in U_j(\xx)$.
\end{lemma}

\begin{proof}

For the sake of contradiction, assume that there exists $\hat{\yy} \in U_j(\xx)$ such that 
$$\min_{\ell \in S} u^L(\hat{\yy},\ell) \geq V.$$ 
By assumption \eqref{eq:assumption-M-j} that $u^L(\xx,j) < M_{-j}$, there exists $\zz \in \Delta^{m-1}$ such that
$\min_{\ell \in [n] \setminus \{j\}} u^L(\zz,\ell) > u^L(\xx,j) \geq V$, which immediately yields the following given that $S \subseteq [n]\setminus\{j\}$ by definition:
\[
\min_{\ell \in S} u^L(\zz,\ell) > V.
\]
By definition, $u^L(\yy^*, \ell) = V$ for all $\ell \in S$, which also implies that $u^L(\yy^*, \ell) > V$ for all $\ell \in [n] \setminus (\{j\} \cup S)$ (otherwise, we would have $\min_{\ell\in[n]\setminus\{j\} u^L(\yy^*, \ell)} < V$).
Thus, we have 
\[
\min_{\ell \in S} u^L(\yy^*,\ell) = V
\quad \text{ and } \quad 
\min_{\ell \in [n] \setminus (\{j\} \cup S)} u^L(\yy^*,\ell) > V.
\]

Now consider a point $\ww$ on the segment $(\yy^*, \hat{\yy}]$. 
Since $\yy^* \in \overline{U_j(\xx)}$ and $\hat{\yy} \in U_j(\xx)$, i.e., $u^L(\yy^*, j) \ge u^L(\xx, j)$ and $u^L(\hat{\yy}, j) > u^L(\xx, j)$, 
we have $u^L(\ww, j) > u^L(\xx, j)$ and hence, $\ww \in U_j(\xx)$.
In addition, by continuity, when $\ww$ is sufficiently close to $\yy^*$, we have 
\begin{align}
\label{eq:ww-1}
\min_{\ell \in [n] \setminus (\{j\} \cup S)} u^L(\ww,\ell) > V.
\end{align}
By concavity of the function $\min_{\ell \in S} u^L(\cdot,\ell)$, since $\min_{\ell \in S} u^L(\yy,\ell) \ge V$ for both $\yy \in \{\yy^*, \hat{\yy}\}$, we have 
\begin{align}
\label{eq:ww-2}
\min_{\ell \in S} u^L(\ww,\ell) \ge V.
\end{align}

Analogously, we can find a point $\ww'\in U_j(\xx)$ on the segment $(\ww, \zz]$, such that \eqref{eq:ww-1}
and \eqref{eq:ww-2} hold for $\ww'$ while \eqref{eq:ww-2} is strict, in particular. Thus, we have
\begin{align*}
\min_{\ell \in [n]\setminus\{j\}} u^L(\ww',\ell) 
> V = \max_{\yy \in \overline{U_j(\xx)}} \min_{\ell \in [n]\setminus\{j\}} u^L(\yy,\ell),
\end{align*}
which is a contradiction as $\ww'\in U_j(\xx)$.
\end{proof}

In what follows, we use the coordinates $(y_1,\dots,y_{m-1})$ for every point $\yy \in \Delta^{m-1}$, i.e., we have
\begin{equation*}
\Delta^{m-1} =
\left\{(y_1,\dots,y_{m-1}) \in \mathbb{R}_{\geq 0} \, :\, \sum_{i=1}^{m-1} y_i \leq 1\right\}.
\end{equation*}
Accordingly, we can write the utility function as 
\begin{align*}
u^L(\yy,\ell)
= \gv_\ell \cdot \yy + u^L(m, \ell),
\end{align*}
where $\gv_\ell \in \mathbb{R}^{m-1}$ and its $i$-th component is $g_{\ell, i} = u^L(i, \ell) - u^L(m, \ell)$;  ``$\cdot$'' denotes the inner product.
Hence, we have
\begin{align}
\label{eq:rewrite-ul}
u^L(\yy,\ell) = \gv_\ell \cdot (\yy-\yy^*) + u^L(\yy^*,\ell) =
\begin{cases}
\gv_\ell \cdot (\yy-\yy^*) + V %
& \text{if } \ell \in S \\
\gv_j \cdot (\yy-\yy^*) + u^L(\xx,j) %
& \text{if } \ell = j
\end{cases}
\end{align}
where $u^L(\yy^*,\ell) = V$ for all $\ell \in S$ by the definition of $S$, and $u^L(\yy^*,j) = u^L(\xx,j)$ by Lemma~\ref{lmm:key-claim-yystar}.
Note that since $U_j(\xx) \neq \emptyset$, it must be that $\gv_j \neq 0$.

We also write the $m$ boundary conditions that define $\Delta^{m-1}$ as $\ee_i \cdot \yy \ge \beta_i$. Namely, for each $i \in [m-1]$, let $\ee_i \in \mathbb{R}^{m-1}$ be the $i$-th unit vector and $\beta_i=0$, while  $\ee_{m} = (-1, \dots, -1) \in \mathbb{R}^{m-1}$ and $\beta_m = -1$.
Thus, $\Delta^{m-1} = \{\yy \in \mathbb{R}^{m-1} \, : \, \ee_i \cdot \yy \geq \beta_i \text{ for } i \in [m] \}$.
Let 
\begin{align*}
B =  \{i \in [m] \, : \, \ee_i \cdot \yy^* = \beta_i\}
\end{align*}
be the set of boundary conditions that are tight for $\yy^*$. 
Note that for any $\yy \in \Delta^{m-1}$ we have
\begin{equation}
\label{eq:boundary-property}
\ee_i \cdot (\yy-\yy^*) \geq 0 \quad \text{ for all } i \in B.
\end{equation}

We can now prove the following result using Farkas' Lemma (Lemma~\ref{lmm:farkas}), which allows us to express $-\gv_j$ as a non-negative linear combination of $\gv_\ell$'s and $\ee_i$'s.

\begin{lemma}
\label{lmm:non-neg-linear-comb}
$-\gv_j$ can be expressed as a non-negative linear combination of $\{\gv_\ell  :  \ell \in S\} \cup \{\ee_i  :  i \in B\}$, i.e.
$-\gv_j = \sum_{\ell \in S} \lambda_\ell \gv_\ell + \sum_{i \in B} \mu_i \ee_i$,
where $\lambda_\ell \geq 0$ and $\mu_i \geq 0$. 
\end{lemma}

\begin{proof}
We use Farkas' Lemma (Lemma~\ref{lmm:farkas}) and let $n_1 = m-1$ and $n_2 = |S| + |B|$. The columns of $\mathbf{A}$ are exactly the vectors $\{\gv_\ell  :  \ell \in S\} \cup \{\ee_i  :  i \in B\}$. We set $\bb = -\gv_j$. Note that the first alternative of Farkas' Lemma immediately yields the statement we want to prove. Thus, we set out to prove that the second alternative cannot hold.

Assume, for the sake of contradiction, that there exists $\zz \in \mathbb{R}^{m-1}$ such that $\mathbf{A}^\mathsf{T} \zz \geq 0$ and $\bb \cdot \zz < 0$, i.e., $\gv_\ell \cdot \zz \geq 0$ for all $\ell \in S$, $\ee_i \cdot \zz \geq 0$ for all $i \in B$, and $\gv_j \cdot \zz > 0$.

Then, by picking $\delta > 0$ sufficiently small, it holds for $\yy = \yy^* + \delta \zz$ that:
\begin{itemize}
\item 
By~\eqref{eq:rewrite-ul}, we have the following for all $\ell \in S$:
$$u^L(\yy,\ell) = \gv_\ell \cdot (\yy - \yy^*) + V = \delta \gv_\ell \cdot \zz + V \geq V.$$
In addition,
$$u^L(\yy,j) = \gv_j \cdot (\yy-\yy^*) + u^L(\yy^*,j) = \delta \gv_j \cdot \zz + u^L(\xx,j) > u^L(\xx,j).$$

\item 
$\yy \in \Delta^{m-1}$: For $i \in B$, we immediately obtain that $\ee_i \cdot \yy = \ee_i \cdot (\yy^*+\delta \zz) \geq \ee_i \cdot \yy^* = \beta_i$, which means that these boundary conditions are satisfied. For $i \in [m] \setminus B$, we know that $\ee_i \cdot \yy^* > \beta_i$ and thus by picking $\delta > 0$ small enough, we can ensure that $\ee_i \cdot \yy = \ee_i \cdot \yy^* + \delta (\ee_i \cdot \zz) \geq \beta_i$. 

\end{itemize}
Thus, it follows that $\yy \in U_j(\xx)$ and $\min_{\ell \in S} u^L(\yy,\ell) \geq V$. But this cannot hold according to Lemma~\ref{lmm:super-claim}.
\end{proof}

We can now complete the proof of Proposition~\ref{prp:key-condition-holds}.
We first express $-\gv_j$ as a non-negative linear combination of the vectors $\{\gv_\ell : \ell \in S\} \cup \{\ee_i : i \in B\}$. By Lemma~\ref{lmm:non-neg-linear-comb} we know that this is possible and it is easy to see that we can find the coefficients in polynomial time (e.g. by solving an LP).
We thus obtain $-\gv_j = \sum_{\ell \in S} \lambda_\ell \gv_\ell + \sum_{i \in B} \mu_i \ee_i$, where $\lambda_\ell \ge 0$ for every $\ell \in S$ and $\mu_i \ge 0$ for every $i \in B$.
Let $\widehat{S} = \{\ell \in S : \lambda_\ell > 0\}$. 
We will argue that $\widehat{S} \neq \emptyset$.

Observe that since now $-\gv_j = \sum_{\ell \in S} \lambda_\ell \gv_\ell + \sum_{i \in B} \mu_i \ee_i$ and, by \eqref{eq:boundary-property}, we have $\ee_i \cdot (\yy-\yy^*) \geq 0$ for all $\yy\in\Delta^{m-1}$ and $i \in B$, it follows that, for all $\yy\in\Delta^{m-1}$, we have
\begin{align}
-\gv_j \cdot (\yy-\yy^*) 
&= \sum_{\ell \in S} \lambda_\ell \gv_\ell \cdot (\yy-\yy^*) + \sum_{i \in B} \mu_i \ee_i \cdot (\yy-\yy^*) \nonumber\\
&\ge \sum_{\ell \in S} \lambda_\ell \gv_\ell \cdot (\yy-\yy^*) \nonumber\\
&= \sum_{\ell \in \widehat{S}} \lambda_\ell \gv_\ell \cdot (\yy-\yy^*),
\label{eq:lambda-g-hatS}
\end{align}
where the last transition is due to the fact that $\lambda_\ell = 0$ for all $\ell \in S\setminus \widehat{S}$, as implied by the definition of $\widehat{S}$.

Since $U_j(\xx) \neq \emptyset$, consider any $\yy \in U_j(\xx)$. By definition, this means that $u^L(\yy,j) > u^L(\xx,j)$, which further implies that $\gv_j \cdot (\yy-\yy^*) > 0$ since $u^L(\yy,j) = \gv_j \cdot (\yy-\yy^*) + u^L(\xx,j)$ by \eqref{eq:rewrite-ul}.
By \eqref{eq:lambda-g-hatS}, we then have 
\[
\sum_{\ell \in \widehat{S}} \lambda_\ell \gv_\ell \cdot (\yy-\yy^*) < 0.
\]
Hence, $\widehat{S} \neq \emptyset$.

It remains to show that with the above $\widehat{S}$ and, in particular, $\alpha = 1/\lambda_k$ (recall that $k \in \argmin_{\ell \in \widehat{S}} u^L(\xx,\ell)$), the condition in Lemma~\ref{lmm:key-condition} holds, i.e., we prove that $\min_{\ell \in \widehat{S}} u^L(\yy,\ell) \le u^L(\xx, j)$ holds for all $\yy\in\Delta^{m-1}$ such that $\fBR(\yy) \cap \widehat{S} \neq \emptyset$.

For the sake of contradiction, suppose that there exists $\yy \in \Delta^{m-1}$ such that $\fBR(\yy) \cap \widehat{S} \neq \emptyset$, but $u^L(\yy, \ell) > u^L(\xx, j)$ for all $\ell \in \widehat{S}$. 
By \eqref{eq:at-least-V}, we have $u^L(\xx, j) \ge V$, and
thus $u^L(\yy,\ell) > V$ for all $\ell \in \widehat{S}$.
By \eqref{eq:rewrite-ul}, we have $u^L(\yy,\ell) = \gv_\ell \cdot (\yy-\yy^*) + V$; thus, $\gv_\ell \cdot (\yy-\yy^*) > 0$ for all $\ell \in \widehat{S}$.

Using \eqref{eq:lambda-g-hatS} and the fact that $k \in \widehat{S}$ by our choice, we then obtain
\begin{align*}
-\gv_j \cdot (\yy-\yy^*) 
\geq \sum_{\ell \in \widehat{S}} \lambda_\ell \gv_\ell \cdot (\yy-\yy^*)
\geq \lambda_k \gv_k \cdot (\yy-\yy^*).
\end{align*}
By \eqref{eq:rewrite-ul}, we have
\begin{align*}
u^L(\xx,j) - u^L(\yy,j)
= -\gv_j \cdot (\yy-\yy^*).
\end{align*}
Recall that it is defined that $\uff(\yy,j) 
= -u^L(\yy,k) + \alpha \left(u^L(\xx,j) - u^L(\yy,j)\right)$ as in \eqref{eq:uF-theta-def}.
Using the above two equations and \eqref{eq:rewrite-ul}, we then obtain the following:
\begin{align*}
\uff(\yy,j) 
&= -u^L(\yy,k) + \alpha \left(u^L(\xx,j) - u^L(\yy,j)\right) \\
&= -\gv_k \cdot (\yy-\yy^*) - V - \alpha \gv_j \cdot (\yy-\yy^*) \\
&\geq -V + (\alpha \lambda_k - 1) \gv_k \cdot (\yy-\yy^*)\\ 
&= -V.
\end{align*}
However, by \eqref{eq:uF-theta-def} we also have $\uff(\yy,\ell) = -u^L(\yy, \ell)$ if $\ell \in \widehat{S}$, which implies that for all $\ell \in \widehat{S}$ it holds that
\[
\uff(\yy,j) \ge - V > -u^L(\yy, \ell) = \uff(\yy,\ell).
\]
Hence, $\fBR(\yy) \cap \widehat{S} = \emptyset$, which contradicts our assumption.
\end{proof}

\section{Robustness with Respect to Equilibrium Selection}
\label{sc:strong}

As discussed in Section~\ref{sec:prelim}, a weakness of BR- and payoff-inducible strategy profiles is that the resulting games may have multiple SSEs, in which case the follower depends on the leader to choose the SSE that maximizes his utility. To avoid this, in this section, we turn our attention to strong inducibility (see Definition~\ref{def:strong-inducible}) and attempt to find a payoff matrix $\uff$ such that $\widetilde{\mathcal{G}}$ has a unique SSE.

We begin with an example showcasing that, in general, the best strongly inducible profile can be much worse than the best payoff-inducible profile.

\begin{example}
\label{exm:degenerate}
Consider a $3 \times 2$ game $\calG=(u^L,u^F)$ with the payoff matrices given in Figure~\ref{fig:degenerate}. 
Note that the follower obtains positive utility only by playing his strategy $1$. 
Now, observe that the SSE $(\xx^*, 1)$, $\xx^*=(0,0,1) \in \Delta^2$, is payoff-inducible and yields a utility of $1$ for the follower: it can be induced by any payoff matrix in which strategy $1$ of the follower strictly dominates all other strategies.
However, such a payoff matrix will also induce other SSEs, e.g., $(\yy^*,1)$ with $\yy^*=(1,0,0) \in \Delta^2$. Indeed, it holds that no profile of the form $(\yy,1)$ can be \emph{strongly} induced, and thus the optimal utility the follower can obtain at a strongly inducible profile is $0$. To see this, first note that, as seen above, if the follower claims that strategy $1$ is his unique best response for all points in $\Delta^2$, then the SSE is not unique. On the other hand, if strategy $2$ is a best response at some point $\zz \in \Delta^2$, then $(\yy,1)$ will not be an SSE, since for the leader $u^L(\yy,1) < u^L(\zz,2)$ for any $\yy, \zz \in \Delta^2$.
\hfill $\qed$
\end{example}

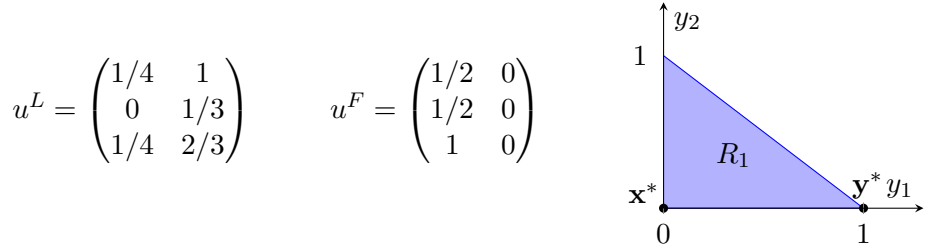
\begin{figure}[t]
\center
\begin{tikzpicture}%[scale=0.65]
\node at (-7,1.3) {$u^L = 
\begin{pmatrix}
1/4 & 1  \\
0 & 1/3 \\
1/4 & 2/3 
\end{pmatrix}$};

\node at (-3,1.3) {$u^F = 
\begin{pmatrix}
1/2 & 0  \\
1/2 & 0 \\
1 & 0 
\end{pmatrix}$};

\begin{axis}[
	%ternary limits relative,
	%title={},
	width= 50mm,
	xtick={0,1},
	ytick={0,1},
	hide obscured y ticks = false,
	ztick=\empty,
	xticklabels={$0$,$1$},
	yticklabels={$0$,$1$},
	xtick pos = right,
	area style,
	view={90}{-90},
	grid=none,
	clip=false,
	axis lines=center,
	xlabel={$y_2$}, ylabel={$y_1$}, 
	xmin = 0, xmax = 1.35,
	ymin = 0, ymax = 1.3,
]

\addplot3 coordinates {
	(1, 0, 0)
	(0, 1, 0)
	(0, 0, 1)
};

% \addplot3 coordinates {
% 	(0.2, 0.8, 0)
% 	(0, 0, 0)
% 	(0, 1, 0)
% };

\node at (axis cs:0.35,0.35,0.3) {$R_1$};
%\node at (axis cs:0.1,0.8,0.15) {$R_2$};

\node[circle,inner sep=1.2pt,fill=black] at (axis cs:0, 0, 0.5) {};
\node at (axis cs:0.1, -0.1, 0.55) {$\xx^*$};
\node[circle,inner sep=1.2pt,fill=black] at (axis cs:0, 1, 0.5) {};
\node at (axis cs:0.15, 1.02, 0.55) {$\yy^*$};

\end{axis}
\end{tikzpicture}
\caption{A game where the optimal inducible utility is $1$, but the optimal \emph{strongly} inducible utility is $0$. \label{fig:degenerate}}
\end{figure}

The problem in Example~\ref{exm:degenerate} stems from the following observation: if the follower reports a payoff matrix such that strategy $1$ is the unique best response for all points in the domain, then there are multiple SSEs. This can be thought of as a ``degenerate'' case, since it would occur with probability $0$, if the payoffs of the leader were drawn uniformly at random in $[0,1]$. We formalize this as follows.

\begin{definition}\label{def:degenerate}
A leader payoff matrix $u^L$ is said to be \emph{max-degenerate}, if there exists $j \in [n]$ such that $|\argmax_{i \in [m]} u^L(i,j)| > 1$.
\end{definition}

We next provide an example showing that even when $u^L$ is \emph{not} max-degenerate, we cannot hope to \emph{exactly} achieve the optimal inducible utility via a strongly inducible profile.

\begin{example}
\label{exm:only_approx_strong_possible}
Consider a $3 \times 2$ game with the leader and follower payoff matrices given in Figure~\ref{fig:no_opt_ind_payoff}. It is easy to check that $u^L$ is not max-degenerate. Now, observe that the maximin utility of the leader is $M=1/2$ and is achieved at the point $\yy^* = (\frac{1}{2},\frac{1}{2},0) \in \Delta^2$. Let $\xx^* = (0,0,1) \in \Delta^2$. Since $u^L(\xx^*,1) = 1/2 \geq M$, it follows that $(\xx^*,1)$ is payoff-inducible by Theorem~\ref{thm:payoff-inducibility-M}. Indeed, the partition $(R_1,R_2)$ of $\Delta^2$ in Figure~\ref{fig:no_opt_ind_payoff} shows how $(\xx^*,1)$ can be induced. Note that $u^F(\xx^*,1) = 1$, while any profile different from $(\xx^*,1)$ yields utility strictly less than $1$ for the follower. We will now show that $(\xx^*,1)$ cannot be strongly induced, which implies that any strongly inducible profile gives utility strictly less than $1$ to the follower. 
Indeed, suppose that $(\xx^*,1)$ is induced by some $\uff$.
If by $\uff$ strategy $1$ is a best response to $\yy^*$, then $(\xx^*,1)$ cannot be the unique SSE, since $u^L(\xx^*,1) = u^L(\yy^*,1)$. On the other hand, if strategy $2$ is the only best response to $\yy^*$, then there exists some sufficiently small $\delta > 0$ such that strategy $2$ is also a best response to $\ww^*=(\frac{1}{2}-\delta,\frac{1}{2}+\delta, 0)$ (see Figure~\ref{fig:no_opt_ind_payoff}). However, this means that $(\xx^*,1)$ cannot be an SSE, since $u^L(\xx^*,1) = 1/2$ and $u^L(\ww^*,2) = 1/2+\delta$.
\hfill $\qed$
\end{example}

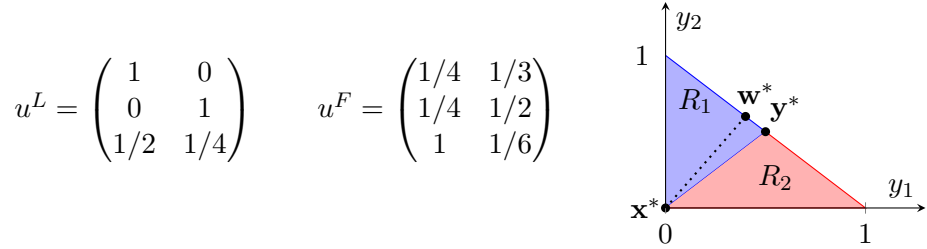
\begin{figure}[t]
\center
\begin{tikzpicture}%[scale=0.65]
\node at (-7,1.3) {$u^L = 
\begin{pmatrix}
1 & 0 \\
0 & 1 \\
1/2 & 1/4
\end{pmatrix}$};

\node at (-3,1.3) {$u^F = 
\begin{pmatrix}
1/4 & 1/3 \\
1/4 & 1/2 \\
1 & 1/6  
\end{pmatrix}$};

\begin{axis}[
	%ternary limits relative,
	%title={},
	width= 50mm,
	xtick={0,1},
	ytick={0,1},
	hide obscured y ticks = false,
	ztick=\empty,
	xticklabels={$0$,$1$},
	yticklabels={$0$,$1$},
	xtick pos = right,
	area style,
	view={90}{-90},
	grid=none,
	clip=false,
	axis lines=center,
	xlabel={$y_2$}, ylabel={$y_1$}, 
	xmin = 0, xmax = 1.35,
	ymin = 0, ymax = 1.3,
]

\addplot3 coordinates {
	(1,0,0)
	(0.5, 0.5, 0)
	(0, 0, 0)
};

\addplot3 coordinates {
	(0.5, 0.5, 0.25)
	(0, 1, 0.5)
	(0, 0, 1)
	(0, 0, 0.5)
};

%\addplot3 coordinates {
\draw[black,dotted,thick] (0, 0, 1) -- (0.6, 0.4, 0.5);
%};

%\draw[black,dotted,thick] (0, 0, 1) -- (0.4, 0.6, 0.5);

\node at (axis cs:0.7,0.15,0.15) {$R_1$};
\node at (axis cs:0.2,0.55,0.25) {$R_2$};

\node[circle,inner sep=1.2pt,fill=black] at (axis cs:0, 0, 0.5) {};
\node at (axis cs:0, -0.1, 0.55) {$\xx^*$};

\node[circle,inner sep=1.2pt,fill=black] at (axis cs:0.5, 0.5, 0) {};
\node at (axis cs:0.65, 0.6, -0.15) {$\yy^*$};

%\node[circle,inner sep=1.2pt,fill=black] at (axis cs:0.4, 0.6, 0) {};
%\node at (axis cs:0.55, 0.75, -0.15) {$\zz^*$};

\node[circle,inner sep=1.2pt,fill=black] at (axis cs:0.6, 0.4, 0) {};
\node at (axis cs:0.75, 0.45, -0.15) {$\ww^*$};

\end{axis}
\end{tikzpicture}
\caption{A non-max-degenerate game for which the optimal inducible utility cannot be achieved by any strongly inducible profile.\label{fig:no_opt_ind_payoff}}
\end{figure}

As a result, unlike in the previous section, here we cannot hope to solve the problem exactly. 
However, the next theorem shows that we can approximate the optimal utility with arbitrarily good precision.

\begin{theorem}\label{thm:strong-induc}
If $u^L$ is not max-degenerate, then for any $\varepsilon > 0$, the follower can strongly induce a profile $(\xx,j)$ that yields the optimal inducible utility up to an additive loss of at most $\varepsilon$. Furthermore, a matrix $\uff$ strongly inducing $(\xx,j)$ can be constructed in time polynomial in $\log(1/\varepsilon)$ (and the size of the representation of the game).
\end{theorem}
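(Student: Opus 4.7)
The plan is to reduce Theorem~\ref{thm:strong-induc} to the payoff-inducibility result of Section~\ref{sec:payoff} via a small, carefully chosen perturbation. First, I would solve the LP from Section~\ref{sec:payoff} to obtain an optimal payoff-inducible profile $(\xx^*, j^*)$ with $u^L(\xx^*, j^*) \geq M$ and $u^F(\xx^*, j^*)$ equal to the optimal inducible utility. Define $i^* = \arg\max_{i \in [m]} u^L(i, j^*)$; by the non-max-degeneracy hypothesis this maximizer is unique. In the pure-strategy subcase $\xx^* = \ee_{i^*}$, setting $\uff$ so that $j^*$ strictly dominates every other column (e.g., $\uff(i, j^*) = 1$ and $\uff(i, \ell) = 0$ otherwise) forces $\fBR(\yy) = \{j^*\}$ for all $\yy$, whereupon the leader's problem reduces to $\arg\max_{\yy} u^L(\yy, j^*) = \{\ee_{i^*}\}$, a singleton, strongly inducing $(\xx^*, j^*)$ with zero loss.

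Otherwise $\xx^* \neq \ee_{i^*}$, and I would perturb towards $\ee_{i^*}$ by setting $\xx^\rho = (1-\rho)\xx^* + \rho\, \ee_{i^*}$ for a small $\rho \in (0, 1]$. Non-max-degeneracy together with $\xx^* \neq \ee_{i^*}$ forces $u^L(i^*, j^*) > u^L(\xx^*, j^*)$, so $u^L(\xx^\rho, j^*) > u^L(\xx^*, j^*) \geq M$ strictly, with margin $\mu := u^L(\xx^\rho, j^*) - M = \Theta(\rho)$. Since $u^F(\cdot, j^*)$ is linear, taking $\rho = \Theta(\varepsilon / \|u^F\|_\infty)$ (of $O(\log(1/\varepsilon))$ bit-complexity) guarantees $u^F(\xx^\rho, j^*) \geq u^F(\xx^*, j^*) - \varepsilon$. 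The task is then to strongly induce $(\xx^\rho, j^*)$.

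The core technical step is constructing $\uff$ strongly inducing $(\xx^\rho, j^*)$. I would begin with the base matrix $\uff^0$ from the construction in~\eqref{eq:uF-theta-def}, which already induces $(\xx^\rho, j^*)$ but may admit competing SSEs because the leader's maximization of $u^L(\cdot, j^*)$ over $\fBR^{-1}(j^*)$ attains its bound along a whole face of the polytope. The strict margin $\mu > 0$ is the leverage to break these ties: I would add a small linear perturbation $\eta\, \tilde{u}$ to $\uff^0$, where $\tilde{u}$ tilts the boundary of $\fBR^{-1}(j^*)$ so that the tilted maximum of $u^L(\cdot, j^*)$ is uniquely attained at $\xx^\rho$, while similar tilts of the columns indexed by $\widehat{S}$ push any competing SSEs with $\ell \neq j^*$ to leader utilities strictly below $u^L(\xx^\rho, j^*)$. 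Polynomial bounds relating $\eta$ to $\mu$ and to the game size keep the bit-complexity of $\eta$ at $O(\log(1/\varepsilon))$.

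The main obstacle is controlling the tilt so that the new SSE remains at $\xx^\rho$, and in particular ensuring that $\xx^\rho$ is (or can be arranged to be) a vertex of the active face cut out by $u^L(\cdot, j^*) = u^L(\xx^\rho, j^*)$; otherwise no single tilting direction $\cc$ isolates $\xx^\rho$ on that face. I would handle this by refining the perturbation: instead of a single straight move towards $\ee_{i^*}$, combine small moves towards additional vertices of $\Delta^{m-1}$ (chosen to zero out coordinates of $\xx^\rho$ on facets that keep it from being a vertex), while keeping the total displacement inside an $O(\varepsilon)$ $u^F$-neighborhood of $\xx^*$. Once $\xx^\rho$ is a vertex of the relevant face, a standard LP-tie-breaking direction $\cc$ exists and the perturbation argument above goes through; combined with the polynomial-time constructions of $\uff^0$ and the Section~\ref{sec:payoff} LP, this yields the claimed $\mathrm{poly}(\log(1/\varepsilon))$ running time.
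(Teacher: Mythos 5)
Your high-level plan matches the paper's skeleton — start from an optimal payoff-inducible $(\xx^*,j^*)$, perturb to a nearby point with a strict margin $u^L(\cdot,j^*) > u^L(\xx^*,j^*) \geq M$, and then add a tie-breaking tilt to the $\uff$ of~\eqref{eq:uF-theta-def} — and your $\xx^* = \ee_{i^*}$ base case aligns with the paper's $\delta=0$ case. But the way you construct the perturbed point creates the vertex problem rather than solving it, and your proposed fix for it does not work.

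Concretely: after moving to $\xx^\rho = (1-\rho)\xx^* + \rho\,\ee_{i^*}$, the point $\xx^\rho$ generically lies in the relative interior of the face $P := \{\yy \in \Delta^{m-1} : u^L(\yy,j^*) = u^L(\xx^\rho,j^*)\}$, whose vertices have support of size at most $2$ (they sit on edges of $\Delta^{m-1}$). Your $\xx^\rho$ has support containing $\mathrm{supp}(\xx^*) \cup \{i^*\}$, so unless $\xx^*$ already had tiny support, $\xx^\rho$ is nowhere near a vertex of $P$. You correctly identify this as the main obstacle — no single tilt direction isolates an interior point of a face — but the repair you sketch, ``combine small moves towards additional vertices of $\Delta^{m-1}$ to zero out coordinates,'' goes the wrong way: moving toward a vertex $\ee_i$ \emph{increases} the $i$-th coordinate and never zeroes anything out short of hitting the boundary, so it enlarges the support rather than shrinking it. To land on a vertex of $P$ you would have to kill all but two coordinates, which is a move of constant size, not one that stays in an $O(\varepsilon)$ $u^F$-neighborhood of $\xx^*$.

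The paper sidesteps this entirely by not committing to a perturbation direction at all. It solves the LP~\eqref{eq:LPstrong} that maximizes the margin $\delta$ subject to $\xx \in \Delta^{m-1}$, $u^F(\xx,j) \geq u^F(\xx^*,j) - \varepsilon$, and $u^L(\xx,j) = u^L(\xx^*,j)+\delta$, and then (via a second LP over $P_\delta$) takes $\xx$ to be a \emph{vertex} of $P_\delta$ achieving a $u^F$-value at least as large. The vertex property then comes for free, and it is exactly what makes the tilt work: with $B$ the tight boundary indices at $\xx$ and $\hh = \sum_{i\in B}\ee_i$, one has $\hh\cdot(\yy-\xx)\ge 0$ for all $\yy\in\Delta^{m-1}$ and, crucially, $\hh\cdot(\yy-\xx)>0$ for all $\yy\in P_\delta\setminus\{\xx\}$. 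Adding $-\hh\cdot(\yy-\xx)$ to the column $j$ of~\eqref{eq:uF-theta-def} (with $\widehat S=[n]\setminus\{j\}$ and $\alpha=\bigl(2\max_{i,\ell}|u^L(i,\ell)|+m\bigr)/\delta$) then makes $j\in\fBR(\yy)$ for $\yy\ne\xx$ force $u^L(\yy,j)<u^L(\xx,j)$, and Lemma~\ref{lmm:M-V} handles the $\ell\ne j$ responses. Your tilt idea is of the right flavor, but it is not made concrete and cannot be made concrete without first securing the vertex property by a different route — the natural one being the LP the paper uses. You also leave unspecified which $\widehat S$ your base matrix $\uff^0$ uses; the strong-inducibility argument uses the full $[n]\setminus\{j\}$, not the $\widehat{S}$ from Proposition~\ref{prp:key-condition-holds}.
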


\begin{proof}
Let $(\xx^*,j)$ be a payoff-inducible profile that yields the optimal inducible payoff for the follower. By Theorem~\ref{thm:payoff-inducibility-M}, such a profile can be computed in polynomial time.

We begin by solving the following LP.
\begin{equation}\label{eq:LPstrong}
\begin{split}
\max_{\delta, \xx}  \quad& \delta \\
\text{s.t.} \quad 
& \xx \in \Delta^{m-1} \\
& u^F(\xx,j) \geq u^F(\xx^*,j) - \varepsilon \\
& u^L(\xx,j) = u^L(\xx^*,j) + \delta
\end{split}\end{equation}
Note that this LP can be solved in time polynomial in $\log(1/\varepsilon)$. 
Furthermore, note that the polytope of feasible points is not empty since $\delta=0$ and $\xx=\xx^*$ satisfy all the constraints. Finally, the LP is not unbounded since $\delta$ can be at most $\max_{i \in [m]} u^L(i,j) - u^L(\xx^*,j)$.

In the rest of this proof let $\delta$ and $\xx$ denote an optimal solution to this LP. Note that we can in particular assume that $\xx$ is a vertex of the convex polytope $P_\delta = \{\yy \in \Delta^{m-1} \, : \, u^L(\yy,j) = u^L(\xx^*,j) + \delta\}$. Indeed, given a solution $\delta, \xx$ to LP \eqref{eq:LPstrong}, if $\xx$ is not a vertex of $P_\delta$, then we consider the LP
\begin{align*}
\max_{\yy}  \quad& u^F(\yy,j)\\
\text{s.t.} \quad 
& \yy \in \Delta^{m-1} \\
& u^L(\yy,j) = u^L(\xx^*,j) + \delta
\end{align*}
It is known that a solution of an LP that is also a vertex of the feasible polytope can be computed in polynomial time~\citep{GLS1981ellipsoid}. Note that in this case the feasible polytope is exactly $P_\delta$. Let $\yy$ be an optimal solution that is a vertex of $P_\delta$. We know that $\xx \in P_\delta$ and $u^F(\xx,j) \geq u^F(\xx^*,j) - \varepsilon$, which implies that $u^F(\yy,j) \geq u^F(\xx^*,j) - \varepsilon$. But this means that $\delta, \yy$ is also an optimal solution to the original LP \eqref{eq:LPstrong}. Thus, by letting $\xx := \yy$, we indeed have that $\xx$ is a vertex of the convex polytope $P_\delta$.

Let us first handle the case where $\delta = 0$ by showing that $(\xx^*,j)$ itself can be strongly induced. Since $\delta=0$, it follows that $U_j(\xx^*) = \emptyset$. Indeed, if there exists $\hat{\yy} \in \Delta^{m-1}$ with $u^L(\hat{\yy},j) > u^L(\xx^*,j)$, then there exists $\yy$ on the segment $(\xx^*,\hat{\yy}]$ such that $u^F(\yy,j) \geq u^F(\xx^*,j) - \varepsilon$ (when $\yy$ is sufficiently close to $\xx^*$) and $u^L(\yy,j) > u^L(\xx^*,j)$, a contradiction to the optimality of $\delta=0$. Now, given that $U_j(\xx^*) = \emptyset$, we have that $u^L(\yy,j) \leq u^L(\xx^*,j)$ for all $\yy \in \Delta^{m-1}$. But since $u^L$ is not max-degenerate (in the sense of Definition~\ref{def:degenerate}), it follows that in fact $u^L(\yy,j) < u^L(\xx^*,j)$ for all $\yy \in \Delta^{m-1} \setminus \{\xx^*\}$. Thus, if the follower always best responds with strategy $j$, then $(\xx^*,j)$ will be the unique SSE. As seen before, it is easy to implement this behavior by reporting $\uff(i,j) = 1$ and $\uff(i,\ell)=0$ for all $i \in [m]$ and $\ell \in [n] \setminus \{j\}$.

In the rest of this proof, we consider the case $\delta > 0$ and show that $(\xx,j)$ can be strongly induced. Since $u^F(\xx,j) \geq u^F(\xx^*,j) - \varepsilon$, this means that at $(\xx,j)$ the follower achieves the optimal inducible utility up to an additive error of $\varepsilon$. Using the same notation as in the proof of Proposition~\ref{prp:key-condition-holds}, we let
\begin{equation*}
B = \{i \in [m] : \ee_i \cdot \xx = \beta_i\}
\end{equation*}
denote the set of boundary conditions of $\Delta^{m-1}$ that are tight for $\xx$. Note that since $\xx$ is a vertex of the polytope $P_\delta$, it follows that $B \neq \emptyset$. We let $\hh = \sum_{i \in B} \ee_i$. As in the proof of Proposition~\ref{prp:key-condition-holds}, we have that for all $\yy \in \Delta^{m-1}$ it holds that
\begin{equation}\label{eq:boundary-strong}
\hh \cdot (\yy - \xx) = \sum_{i \in B} \ee_i \cdot (\yy - \xx) \geq 0.
\end{equation}
Furthermore, since $\xx$ is a vertex of $P_\delta$, it follows that for all $\yy \in P_\delta \setminus \{\xx\}$ there exists $i \in B$ such that $\ee_i \cdot (\yy - \xx) > 0$, and thus
\begin{equation}\label{eq:boundary-polytope-strong}
\hh \cdot (\yy - \xx) > 0.
\end{equation}
Indeed, if $\ee_i \cdot (\yy - \xx) = 0$ for all $i \in B$ for some $\yy \in P_\delta \setminus \{\xx\}$, this would contradict the fact that $\xx$ is a vertex of $P_\delta$ (i.e.\ the unique point in $P_\delta$ for which the boundary conditions in $B$ are tight).

We are now ready to construct the payoff matrix reported by the follower. Pick an arbitrary $k \in \argmin_{\ell \in [n] \setminus \{j\}} u^L(\xx,\ell)$. For all $\yy \in \Delta^{m-1}$ let
\begin{equation}
\label{eq:util-strong}
\uff(\yy,\ell) = \begin{cases}
-u^L(\yy,\ell) & \text{if } \ell \in [n] \setminus \{j\}\\
-u^L(\yy,k) + \alpha \left(u^L(\xx,j) - u^L(\yy,j)\right) - \hh \cdot (\yy-\xx) & \text{if } \ell=j
\end{cases}
\end{equation}
where $\alpha = \left(2 \max_{i \in [m]} \max_{\ell \in [n]} \left|u^L(i,\ell)\right|  + m \right)/\delta > 0$. Note that we can compute the payoff matrix corresponding to this utility function in polynomial time. In the remainder of this proof, we show that $(\xx,j)$ is the unique SSE of the game $(u^L,\uff)$.

Clearly, $j$ is a best response at $\xx$, since 
\[
\uff(\xx,j) = -u^L(\xx,k) = - \min_{\ell \in [n] \setminus \{j\}} u^L(\xx,\ell) = \max_{\ell \in [n] \setminus \{j\}} \uff(\xx,\ell),
\]
by the choice of $k$.

Next, let us show that if $j$ is a best response at some $\yy \in \Delta^{m-1} \setminus \{\xx\}$, then $u^L(\yy,j) < u^L(\xx,j)$. Indeed, if $j$ is a best response at $\yy$, then in particular $\uff(\yy,j) \geq \uff(\yy,k)$, which implies that
\begin{equation}\label{eq:j-best-response}
\alpha \left(u^L(\xx,j) - u^L(\yy,j)\right) \geq \hh \cdot (\yy-\xx).
\end{equation}
Since $\hh \cdot (\yy-\xx) \geq 0$ by \eqref{eq:boundary-strong}, and $\alpha > 0$, it follows that $u^L(\xx,j) \geq u^L(\yy,j)$. It remains to show that $u^L(\xx,j) \neq u^L(\yy,j)$. But if $u^L(\xx,j) = u^L(\yy,j)$, then $\yy \in P_\delta \setminus \{\xx\}$ and so by \eqref{eq:boundary-polytope-strong} we have $\hh \cdot (\yy-\xx) > 0$, which contradicts \eqref{eq:j-best-response}.

Finally, it remains to show that if $\ell \in [n] \setminus \{j\}$ is a best response at some $\yy \in \Delta^{m-1}$, then it must be that $u^L(\yy,\ell) < u^L(\xx,j)$: Indeed, if $\ell \in [n] \setminus \{j\}$ is a best response at $\yy$, then in particular $\uff(\yy,j) \leq \uff(\yy,\ell)$, which by \eqref{eq:util-strong} means that
\begin{equation*}
\begin{split}
\alpha \left(u^L(\xx,j) - u^L(\yy,j)\right) &\leq -u^L(\yy,\ell) + u^L(\yy,k) + \hh \cdot (\yy - \xx)\\
&\leq -u^L(\yy,\ell) + u^L(\yy,k) + \|\hh\|_2  \|\yy - \xx\|_2\\
&\leq  2 \max_{i \in [m]} \max_{\ell' \in [n]} |u^L(i,\ell')| + \sqrt{m-1} \sqrt{m-1}\\
&\leq \alpha \delta
\end{split}
\end{equation*}
by the choice of $\alpha$. Thus, we obtain that $u^L(\xx,j) - u^L(\yy,j) \leq \delta$, which implies that $u^L(\yy,j) \geq u^L(\xx^*,j)$, i.e.\ $\yy \in \overline{U_j(\xx^*)}$ (since $U_j(\xx^*) \neq \emptyset$). Since $(\xx^*,j)$ is payoff-inducible, which means that $u^L(\xx^*,j) \geq M$, we can use Lemma~\ref{lmm:M-V} to obtain
\begin{equation*}
u^L(\xx,j) = u^L(\xx^*,j) + \delta > u^L(\xx^*,j) \geq \min_{\ell' \in [n] \setminus \{j\}} u^L(\yy,\ell') = u^L(\yy,\ell)
\end{equation*}
where the last equality comes from the fact that $\ell$ is a best response at $\yy$, i.e., in particular $\uff(\yy,\ell) = \max_{\ell' \in [n] \setminus \{j\}} \uff(\yy,\ell')$.
\end{proof}

\section{Directions for Future Work}\label{sec:future}

An interesting first question that emerges from our results, is how to design countermeasures to mitigate the potential loss of a learning leader, caused by possible deceptive behavior of the follower. This was considered in \citep{gan2019imitative}, where as a solution it was proposed that the leader could commit to a policy, which is a strategy conditioned on the report of the follower, instead of a strategy. However, in contrast to \citep{gan2019imitative}, where the follower's report is limited to a finite set of payoff matrices, computing the optimal policy in our model seems to be a very challenging problem.
In addition, it would be nice to explore whether the optimal follower payoff matrix (or a good approximation of it) can still be computed efficiently, when additional constraints on how much he can deviate from his true payoff matrix are imposed.
Finally, another interesting direction would be to quantify and provide tight bounds on the leader's utility loss, caused by the deceptive behavior of the follower.

\bibliographystyle{plainnat}
\bibliography{references.bib}

\end{document}